\newcommand{\alphaS}{{\alpha^{}_{\mbox{\tiny{\textrm{S}}}}}}
\newcommand{\nc}{\newcommand}
\nc{\sV}{\mathbf{s}}
\nc{\AV}{\mathbf{A}}
\nc{\BV}{\mathbf{B}}
\nc{\DV}{\mathbf{D}}
\nc{\EV}{\mathbf{E}}
\nc{\nullV}{\boldsymbol{0}}
\nc{\colb}[1]{\textcolor{blue}{#1}}
\nc{\ellV}{\vec{\ell}}
\nc{\Cset}{\mathbb{C}}
\nc{\Nset}{\mathbb{N}}
\nc{\Rset}{\mathbb{R}}
\nc{\RR}{\mathbb{R}}
\nc{\Sset}{\mathbb{S}}
\nc{\Zset}{\mathbb{Z}}
\nc{\mEL}{m_{\mathrm{e}}}
\nc{\mPR}{m_{\mathrm{p}}}
\nc{\mZ}{m_{\mathrm{z}}}
\nc{\far}{{\mathfrak{a}}}
\nc{\fm}{{\mathfrak{m}}}
\nc{\les}{\lesssim}
\nc{\nit}{\noindent}
\nc{\nn}{\nonumber}
\nc{\D}{\partial}
\nc{\diff}[2]{\frac{d #1}{d #2}}
\nc{\diffn}[3]{\frac{d^{#3} #1}{d {#2}^{#3}}}
\nc{\pdiff}[2]{\frac{\partial #1}{\partial #2}}
\nc{\pdiffn}[3]{\frac{\partial^{#3} #1}{\partial{#2}^{#3}}}
\nc{\abs}[1] {\lvert #1 \rvert}
\nc{\cAc}{{\cal A}_c}
\nc{\cE}{{\mathcal E}}
\nc{\cF}{{\cal F}}
\nc{\cP}{{\cal P}}
\nc{\cV}{{\cal V}}
\nc{\cQ}{{\cal Q}}
\nc{\cGin}{{\cal G}_{\rm in}}
\nc{\cGout}{{\cal G}_{\rm out}}
\nc{\cO}{{\cal O}}
\nc{\Lav}{{\cal L}_{\rm av}}
\nc{\cL}{{\cal L}}
\nc{\cB}{{\cal B}}
\nc{\cZ}{{\cal Z}}
\nc{\cR}{{\cal R}}
\nc{\cT}{{\cal T}}
\nc{\cY}{{\cal Y}}
\nc{\cX}{{\cal X}}
\nc{\cXT}{{{\cal X}(T)}}
\nc{\cBT}{{{\cal B}(T)}}
\nc{\vD}{{\vec \mathcal{D}}}
\nc{\efield}{\mathcal{E}}
\nc{\mE}{\mathcal{E}}
\nc{\vE}{{\vec \efield}}
\nc{\vB}{{\vec \mathcal{B}}}
\nc{\vH}{{\vec \mathcal{H}}}
\nc{\F}{  \mathcal{F} }
\nc{\ty}{{\tilde y}}
\nc{\tu}{{\tilde u}}
\nc{\tV}{{\tilde V}}
\nc{\Pc}{{\bf P_c}}
\nc{\bx}{{\bf x}}
\nc{\bX}{{\bf X}}
\nc{\bXYZ}{{\bf XYZ}}
\nc{\bY}{{\bf Y}}
\nc{\bF}{{\bf F}}
\nc{\bS}{{\bf S}}
\nc{\dV}{{\delta V}}
\nc{\dE}{{\delta E}}
\nc{\TT}{{\Theta}}
\nc{\dPsi}{{\delta\Psi}}
\nc{\order}{{\cal O}}
\nc{\Rout}{R_{\rm out}}
\nc{\eplus}{e_+}
\nc{\eminus}{e_-}
\nc{\epm}{e_\pm}
\nc{\vnabla}{{\vec\nabla}}
\nc{\G}{\Gamma}
\nc{\w}{\omega}
\nc{\mh}{h}
\nc{\mg}{g}
\nc{\vphi}{\varphi}
\nc{\tlambda}{\tilde\lambda}
\nc{\be}{\begin{equation}}
\nc{\ee}{\end{equation}}
\nc{\ba}{\begin{eqnarray}}
\nc{\ea}{\end{eqnarray}}
\nc{\g}{\gamma}
\nc{\ol}{\overline}
\newtheorem{thm}{Theorem}[section]
\newtheorem{lemm}[thm]{Lemma}
\newtheorem{prop}[thm]{Proposition}
\newtheorem{cor}[thm]{Corollary}
\newtheorem{rmk}[thm]{Remark}
\nc{\pr}{\partial_r}
\nc{\pt}{\partial_t}
\nc{\pT}{\partial_T}
\nc{\pz}{\partial_z}
\nc{\la}{\langle}
\nc{\ra}{\rangle}
\nc{\infint}{\int_{-\infty}^{\infty}}
\nc{\halfwidth}{6.5cm}
\nc{\figwidth}{10cm}
\newcommand{\f}{\frac}
\nc{\nlayers}{L} \nc{\nsectors}{M}
\nc{\indicator}{\mathbf{1}}
\nc{\Rhole}{R_{\rm hole}}
\nc{\Rring}{R_{\rm ring}}
\nc{\neff}{n_{\rm eff}}
\nc{\Frem}{F_{\rm rem}}
\nc{\R}{\mathbb R}
\nc{\C}{\mathbb C}
\nc{\Z}{\mathbb Z}
\nc{\DD}{\Delta}
\nc{\cD}{\mathcal D}
\nc{\lnorm}{\left\|}
\nc{\rnorm}{\right\|}
\nc{\rnormp}{\right\|_{\ell^{p,\eps}}}
\nc{\rar}{\rightarrow}
\nc{\mR}{\mathcal R}
\nc{\oo}{\"o}   
\nc{\os}{\overset{o}}
\nc{\eps}{\epsilon}
\nc{\veps}{\varepsilon}
\nc{\dd}{{\mathrm{d}}}
\nc{\mNULL}{{C_\alpha^{}}}
\nc{\phiNULL}{{C_\beta^{\prime}}}
\nc{\beq}{\begin{equation}}
\nc{\eeq}{\end{equation}}
\date{Version of Jan. 18, 2021; printed: \today \\ 
\copyright (2020) The authors. Reproduction of this preprint is permitted for noncommercial purposes.}
\begin{document}

\title[Dirac operator on Reissner--Weyl--Nordstr\"om black hole spacetimes]
{On the Dirac operator for a test electron in a Reissner--Weyl--Nordstr\"om black hole spacetime}

\author[Kiessling, Tahvildar-Zadeh, Toprak]{Michael K.-H. Kiessling, A. Shadi Tahvildar-Zadeh, Ebru Toprak} 

 \address{Department of Mathematics \\
Rutgers University \\
Piscataway, NJ 08854, U.S.A.}
\email{miki@math.rutgers.edu}
 \address{Department of Mathematics \\
Rutgers University \\
Piscataway, NJ 08854, U.S.A.}
\email{shadit@math.rutgers.edu}
\address{Department of Mathematics \\
Rutgers University \\
Piscataway, NJ 08854, U.S.A.}
\email{et400@math.rutgers.edu}
%\subjclass{35Q41, 42B20}

\begin{abstract}
 The present paper studies the Dirac Hamiltonian of a test electron with a domain of bi-spinor wave functions supported on the static region
inside the Cauchy horizon of the subextremal RWN black hole spacetime, respectively inside the event horizon of the extremal RWN
black hole spacetime.
 It is found that this Dirac Hamiltonian is not essentially self-adjoint, yet has infinitely many self-adjoint extensions.
 Including a sufficiently large anomalous magnetic moment interaction in the Dirac Hamiltonian restores 
essential self-adjointness; the empirical value of the electron's anomalous magnetic moment is large enough.
 In the subextremal case the spectrum of the self-adjoint Dirac operator with anomalous magnetic moment is purely absolutely continuous and 
consists of the whole real line; in particular, there are no eigenvalues.
 The same is true for the spectrum of any self-adjoint extension of the Dirac operator without anomalous magnetic moment interaction,
in the subextremal black hole context.
 In the extremal black hole sector the point spectrum, if non-empty, consists of a single eigenvalue, which is identified.
\end{abstract}

\maketitle
\newpage
%%%%%%%%%%%%%%%%%%%%%%%%%%%%%%%%%%%%
%%%%%%%%%%%%%%%%%%%%%%%%%%%%%%%%%%%%
%%%%%%%%%%%%%%%%%%%%%%%%%%%%%%%%%%%%
\section{Introduction}
%%%%%%%%%%%%%%%%%%%%%%%%%%%%%%%%%%%%
%%%%%%%%%%%%%%%%%%%%%%%%%%%%%%%%%%%%
%%%%%%%%%%%%%%%%%%%%%%%%%%%%%%%%%%%%

 It is well-known that if $M>0$ is the ADM mass of the general-relativistic Reissner--Weyl--Nordstr\"om (RWN) 
spacetime and $Q\neq 0$ its charge, and if $G$ denotes Newton's constant of universal gravitation, then the RWN spacetime features a naked 
singularity when $GM^2 < Q^2$ and a black hole when $GM^2 \geq Q^2$; in the borderline case $GM^2 = Q^2$ one speaks of the {\it extremal} RWN 
black hole, while $GM^2 > Q^2$ is called the {\it subextremal} black hole parameter sector, cf. \cite{HE}.
 For the original publications, see \cite{Reissner}, \cite{WeylARTaxisym}, and \cite{Nordstroem}.

 In their paper ``The general-relativistic hydrogen atom'' \cite{CP} Cohen and Powers rigorously
studied the general-relativistic Dirac operator $H$ (\cite{WeylELECTRONandART}, \cite{Erwin}, \cite{BC})
for a test electron in the RWN spacetime of a point nucleus for both the naked singularity sector and the subextremal black hole sector. 
 They made the startling discovery that in the naked singularity sector $H$ is not well-defined,\footnote{One
   may be tempted to consider this result as a vindication for the widespread opinion that
 ``naked singularities are considered unphysical'' (cf. \cite{GreinerETal}, p.562).
   However, this opinion propagates an unfortunate myth.
     It is based on a misunderstanding of Penrose's weak cosmic censorship hypothesis, which surmises that gravitational collapse
   of cosmic matter does not form a naked singularity. 
   In its strict sense the surmise is wrong, as shown first by Christodoulou \cite{DCviol}, \cite{DCnakedEX} 
for spherically symmetric collapse of scalar matter,
   and most recently by Rodnianski and Shlapentokh-Rothman \cite{RodShlap} for collapsing gravitational waves without symmetry assumption; yet it is
  expected that these scenarios are not generic (this was confirmed for the spherically symmetric scalar case, also by Christodoulou \cite{DCinstab}),
   and that generically (or: typically) a gravitational collapse of cosmic matter will not form a naked singularity.
   However, the point nuclei used in quantum-mechanical models of hydrogenic ions of the kind created in our laboratories 
 are not assumed to have formed through gravitational collapse of charged matter in cosmic proportions.
   In short, the weak cosmic censorship hypothesis, even if generically true, is entirely irrelevant to the problem of general-relativistic 
   hydrogenic ions.}
while for the black hole sector there is a well-defined $H$ but  its essential spectrum is the whole real line, void of any eigenvalues.

 The truly startling part of the discoveries of Cohen and Powers \cite{CP} concerns the naked singularity sector, for it means that 
switching on relativistic gravity destroys the well-defined (i.e. essentially self-adjoint) special-relativistic purely electrical 
hydrogenic ion problem for all parameter values which correspond to empirically known long-lived nuclei ($1\leq Z\leq 118$ and $\mPR\leq M<300\mPR$;
here $\mPR$ denotes the proton mass).
 In more technical language, general-relativistic gravity is not at all a weak perturbation (see \cite{Kato}) of special-relativistic electricity
in the atomic realm, notwithstanding the physics folklore that ``the gravitational interaction between an electron and a nucleus is too weak 
to be significant;'' cf. \cite{EtoS}.

 Also the special-relativistic Dirac Hamiltonian for hydrogenic ions with purely electrical Coulomb interactions is not
always essentially self-adjoint [on the minimal domain $C^\infty_c(\Rset^3\backslash\{0\})^4$].
 We recall that when $Z\in\Nset$ counts the number of elementary charges in the nucleus, then the Dirac Hamiltonian 
is essentially self-adjoint for $Z\leq 118$ \cite{Narnhofer,Thaller}.
 For $119\leq Z\leq 137$ it has a \emph{distinguished} self-adjoint extension,\footnote{The distinguished self-adjoint extention
   is defined by allowing $Z\in\Cset$ and demanding analyticity in $Z$. 
  The real threshold values then become $Z =\sqrt{3}/2\alphaS$ instead of $Z=118$, and $Z=1/\alphaS$ instead of $Z=137$.
 Here, $\alphaS := {e^2}/{\hbar c} \approx 1/137.036$ is Sommerfeld's fine structure constant.\vspace{-15pt}} 
yet for $Z>137$ nobody seems to know which one of uncountably many self-adjoint extensions is physically distinguished.
 The heuristic explanation for the breakdown of analytical self-adjointness in the special-relativistic purely electrical 
hydrogenic ion problem is that the electrical Coulomb attraction between nucleus and electron becomes too strong for the angular momentum
barrier to stabilize, and a collapse of the ground state ensues. 
 Since gravity is generally attractive, one would have expected a worsening of the self-adjointness properties in the general-relativistic
problem, but a complete wipeout was presumably not expected by anyone!

 The problem with the lack of essential self-adjointness of the special-relativistic hydrogenic Dirac Hamiltonian 
goes away, however, if one takes the anomalous magnetic moment $\mu_a$ of the electron into account.
 Indeed, as shown in \cite{Beh,GST} for the special-relativistic hydrogenic problem, adding an anomalous magnetic moment operator to 
the Dirac Hamiltonian of a test electron with purely electrostatic interactions produces an essentially self-adjoint Hamiltonian for 
the electron of any hydrogenic ion, independently of the strength of its non-vanishing anomalous magnetic moment; see
\cite{Thaller,ThallerCHEM} for numerically computed eigenvalues as functions of $Z$ beyond $Z=137$. 
 More recently Belgiorno, Martellini, and Baldicchi \cite{BMB} showed that the Dirac operator for a test electron with anomalous magnetic moment 
is essentially self-adjoint in the naked RWN geometry (only) if $|\mu_a| \geq \frac{ 3}{2} \frac{ \sqrt{G} \hbar}{c}$.
 The empirical $|\mu_a|\approx \mu_{\text{class}} := \frac{1}{4\pi}\frac{e^3}{\mEL c^2}$, which we call
\emph{the classical magnetic moment of the electron}.
 Since \vspace{-5pt}
\be 
 { \sqrt{G} \hbar}/{c} \approx 1.3 \cdot 10^{-18} \mu_{\text{class}} ,
\ee
the hurdle for essential self-adjointness, $|\mu_a| \geq \frac{ 3}{2} \frac{ \sqrt{G} \hbar}{c}$, is easily cleared with the empirical electron
data.
 Here, $\mEL$ is the empirical mass of the electron and $-e$ its charge, 
$c$ is the speed of light in vacuum, and $\hbar$ is Planck's constant divided by $2\pi$, as usual.

 With the Dirac operator for an electron in the naked singularity sector of the RWN spacetime basically understood, 
in this paper we will revisit the problem of the Dirac operator for an electron in the black hole sector of the RWN spacetime family;
we will also include some comparative remarks concerning electrons in the naked singularity sector, though.
 
 Our point of departure is the fact that Cohen and Powers \cite{CP} considered a Dirac Hamiltonian with the minimal domain of 
$C^\infty$ bi-spinor functions with compact support outside the event horizon of a subextremal black hole. 
 They proved that this $H$ is essentially self-adjoint, and that it has the whole real line as its essential spectrum. 
 Thus its discrete spectrum is empty and any eigenvalues would have to be embedded in the continuum.
 Yet in \cite{CP} the absence of eigenvalues is shown altogether. 
 Their result means that a test electron outside the event horizon of a subextremal RWN black hole cannot be in a stationary
bound state. 
 In concert with a result of Weidmann \cite{WeiB} this now implies that the essential spectrum is \emph{purely absolutely continuous},
and so in fact is the spectrum of this Dirac Hamiltonian.

 Upon reflection, it is not too surprising not to find bound states of an electron whose wave function is supported outside the event 
horizon of an RWN black hole.  
 After all, one expects the electron to be swallowed by the black hole unless it escapes to spatial $\infty$.
 The capture of the electron by the black hole is not seen in the treatment by Cohen and Powers, who worked 
with a coordinate system that near the end of the first quarter of the 20th century gave rise to the ``frozen star'' scenario.
  The purpose of this coordinate system was to describe
the collapsing evolution of gravitating masses as seen from spatial infinity, and therefore failed to capture the formation of a black hole. 
 Thus, conceivably, a Dirac bound state in the black hole sector of RWN may exist after all, but it would require 
the domain of the Dirac Hamiltonian to not be restricted to bi-spinor wave functions supported outside the event horizon. 
 Of course, it is often argued on positivistic grounds that physics is not concerned with what goes on inside an event horizon, 
but positivism is merely a form of philosphy which should not be confused with the foundations of physics. 
 Also Werner Israel and his collaborators have long advocated \cite{DIM} investigating what's going on inside 
an event horizon according to general relativity theory. 
 Finster, Smoller, and Yau \cite{FSY} in particular have inquired into time-periodic Dirac bi-spinor wave functions that are supported
both outside and inside the event horizon of a RWN black hole spacetime, and found no nontrivial ones in $L^2$. 
 However, since the region between the Cauchy and the event horizon of a RWN black hole spacetime is not static, insisting on time-periodic
bi-spinors also there seems like asking for too much. 
 In this vein, in this paper we will investigate the Dirac Hamiltonian for a test electron in the  RWN black hole
spacetime with the bi-spinor wave function supported entirely on the \emph{static part} of the region inside the event horizon of the
black hole spacetime, which is a static spherically symmetric spacetime with a naked singularity in its own right --- it is not 
asymptotically flat, though.
 \newpage

 Our results, stated informally, are:\vspace{-3pt}

{\bf Theorem 1}: \emph{The Dirac Hamiltonian $H$ for a test electron in the static interior of a (sub-)extremal 
RWN black hole, if it interacts with the singularity only electrically and gravitationally, is not essentially self-adjoint,
yet has infinitely many self-adjoint extensions.
 In the subextremal case, each self-adjoint extension has a purely absolutely continuous spectrum
that extends over the whole real line.}\vspace{-3pt}
  
{\bf Theorem 2}: \emph{The Dirac Hamiltonian $H$ for a test electron in the static interior of a (sub-) extremal
 RWN black hole, if it interacts with the singularity electrically, gravitationally, and through its anomalous magnetic moment, 
is essentially self-adjoint if and only if $|\mu_a| \geq \frac{ 3}{2} \frac{ \sqrt{G} \hbar}{c}$.
 In the subextremal essentially self-adjoint situation, the unique self-adjoint extension has purely absolutely continuous spectrum 
that covers the whole real line.}\vspace{-3pt}

 Thus, the singularity of the RWN spacetime causes a lack of essential self-adjointness (e.s.a.) if the electron is not shielded from it by 
the event horizon and is assumed to interact only electrically and gravitationally with the singularity, but
e.s.a. is restored if a sufficiently large anomalous magnetic moment of the electron is taken into account. 
 The empirical anomalous magnetic moment of the electron is about $10^{18}$ times larger than the critical value, the same 
{critical value} as found in \cite{BMB} for the naked singularity sector.
 
 However, while in Appendix C of \cite{BMB} it is shown that the general-relativistic hydrogenic Dirac Hamiltonian of a test electron with 
anomalous magnetic moment in the naked singularity sector of the RWN spacetime of a nucleus has infinitely many discrete eigenvalues in the gap 
$(-\mEL c^2, \mEL c^2)$ of its essential spectrum, the essentially self-adjoint operator of an electron with anomalous magnetic moment
inside the Cauchy horizon of a subextremal black hole  has {\it no} eigenvalues at all.
 We will also show that in the extremal case there can be at most one eigenvalue, possibly infinitely degenerate, which we identify.

 In the remainder of this paper we make all this precise.

 In section \ref{sec:RWN} we explain that normal nuclei are associated with the naked singularity sector of the RWN spacetime, 
while hypothetical hyper-heavy nuclei, defined by the inequality ${GM \mEL}>{Z e^2}$ (which means the gravitational 
attraction between a positron and a nucleus overcomes their electrical repulsion),
have to be associated with the RWN black hole sector. 
 We also stipulate our dimensionless notation for both the spacetime and the Dirac operators.

 Section \ref{sec:Dirac} is the main technical section.
 We define the Dirac operators, state our theorems precisely, then present their proofs, using 
strategies of \cite{WeiC}, \cite{KSWW}, and \cite{CP}.
 Some of our proofs are overall very similar to proofs in \cite{KTT} for naked-singularity spacetimes, 
yet details vary.

 We conclude in Section \ref{sec:summ}  and emphasize open problems.

\vspace{-10pt}

%%%%%%%%%%%%%%%%%%%%%%%%%%%%%%%%%%%%
%%%%%%%%%%%%%%%%%%%%%%%%%%%%%%%%%%%%
%%%%%%%%%%%%%%%%%%%%%%%%%%%%%%%%%%%%
\section{The Reissner--Weyl--Nordstr\"om spacetime of a point nucleus}\label{sec:RWN}%\vspace{-5pt}
%%%%%%%%%%%%%%%%%%%%%%%%%%%%%%%%%%%%
%%%%%%%%%%%%%%%%%%%%%%%%%%%%%%%%%%%%
%%%%%%%%%%%%%%%%%%%%%%%%%%%%%%%%%%%% 

 In order to facilitate the comparison of our results with those for hydrogenic ions, including the speculative hyper-heavy ones, 
from now on we think of the central timelike singularity of the RWN spacetime as a proxy for the worldline of a point nucleus at rest. 
 Thus for the charge parameter $Q$ of the RWN spacetime we set $Q=Ze$, where $e>0$ denotes the elementary charge (in Gaussian units),
and where $Z\in\Nset$ counts the number of elementary charges carried by the nucleus.
 We let the ADM mass of the RWN spacetime be the nuclear mass, $M_{\mbox{\tiny{ADM}}}=M = A(Z,N)\mPR$, where $\mPR$ is the proton mass,
$N\in\{0,1,2,...\}$ is the number of neutrons in the nucleus, and $A(Z,N)\geq 1$ the nuclear mass number; moreover, 
$A(Z,N)\approx Z+N$ to within $1\%$ accuracy. 

 All empirically known long-lived atomic nuclei are far away from the black hole regime $GM^2 \geq Z^2e^2$.
 This conclusion extends to hypothetical nuclei with arbitrary large $Z$ if they obey the bounds
$Z\leq A(Z,N) \leq 3Z$ known empirically to hold for all long-lived nuclei with $Z\leq 92$ in the current \emph{chart of the nuclides}.
 Assuming these empirical bounds, essentially $N\leq 2Z$ to within $1\%$ accuracy,
one finds $\frac{GM^2}{Z^2e^2}< 9 \frac{G\mPR^2}{e^2}$, and since $\frac{G\mPR^2}{e^2}\approx 8\cdot 10^{-37} \ll 1$ by many powers of 10, 
also $\frac{GM^2}{Z^2e^2}\ll 1$ and $\frac{GM\mPR}{Ze^2}\ll 1$, thus $\frac{GM\mEL}{Ze^2}\ll 1$.

 On the other hand, hypothetical hyper-heavy nuclei, for which $GM\mEL >  Ze^2$, are associated with the black hole sector of the RWN spacetime.
 For suppose not. 
 Then both $GM\mEL > Ze^2$ and $GM^2 < Z^2e^2$ (the latter condition means we are in the RWN naked singularity regime).
 Since $M=A(N,Z)\mPR$ for all nuclei, we have $M\geq Z\mPR$, and since $\mPR = 1836 \mEL$,
we find that $GM^2 < Z^2e^2$ implies that $1836 A GM\mEL < Z^2e^2$, while $Ze^2< GM\mEL $ implies $1836 A Ze^2 < 1836 A GM\mEL$.
 And so, by transitivity, we have  $1836 A Ze^2 < Z^2e^2$, hence $1836 A < Z$, which is impossible with the empirical $A(Z,N)\approx Z+N$. 

 The previous two paragraphs in concert 
imply that the assumption $N\leq 2Z$ cannot be imposed as a condition when inquiring into hyper-heavy hydrogenic ions.
 Fortunately, neutron stars are in a fair sense examples of gravitationally bound nuclei with $M \approx (Z+N)\mPR$ and $Z$ 
very small while $N$ is very large. 
 Of course, neutron stars are not point-like, yet they are only mentioned as an example of gigantic nuclei in nature not obeying the 
$N\leq 2Z$ rule.
 Hyper-heavy nuclei would not only not obey the $N\leq 2Z$ rule, they would have to be associated with the black hole sector and
therefore de-facto be point singularities covered by an event horizon --- as per Einstein's general relativity theory.
 
%%%%%%%%%%%%%%%%%%%%%%%%%%%%%%%%%%%%
%%%%%%%%%%%%%%%%%%%%%%%%%%%%%%%%%%%%
\subsection{The naked singularity regime}\label{sec:RWNnaked}% \vspace{-5pt}
%%%%%%%%%%%%%%%%%%%%%%%%%%%%%%%%%%%%
%%%%%%%%%%%%%%%%%%%%%%%%%%%%%%%%%%%%
 The electrostatic Reissner--Weyl--Nordstr\"om (RWN) spacetime of a naked point nucleus is spherically symmetric, static, asymptotically flat 
and topologically identical to $\Rset^{1,3}\setminus$ a timelike line, equivalently $\Rset\times (\Rset^3\setminus\{0\})$,
covered by a single global chart of spherical coordinates $(t,r,\vartheta,\varphi)\in \Rset\times\Rset_+\times[0,\pi]\times[0,2\pi)$.
 Here, $r$ is the so-called area radius: every point in the RWN spacetime is an element of a unique orbit of the Killing vector flow 
corresponding to its $SO(3)$ symmetry, and this orbit is a scaled copy of $\Sset^2$ with area $A =: 4\pi r^2$, \emph{defining} $r>0$. 
 Moreover, the variables $\vartheta$ and $\varphi$ are the usual polar and azimuthal angles on $\Sset^2$. 
 In dimensionless units where $r$ is measured in multiples of the electron's reduced Compton wave length $\hbar/\mEL c$, and
$t$ in multiples of $\hbar/\mEL c^2$, its metric has the line element
\begin{equation}\label{dsSQR}
ds^2 = - f^2(r) \dd t^2+ f^{-2}(r) \dd r^2 + r^2 \dd\Omega^2, \qquad  
\end{equation}
where $\dd\Omega^2 = \dd\vartheta^2 +\sin^2\vartheta\dd\varphi^2$ is the line element on $\Sset^2$, and where
\begin{equation}\label{fOFr}
f^2(r) = 1 - 2\frac{GM\mEL}{\hbar c}\frac{1}{r}
+ \frac{G\mEL^2}{\hbar c}\frac{Z^2e^2}{\hbar c}\frac{1}{r^2}.
\end{equation}
  Here, $M= A(Z,N)\mPR$, and we note that
$\frac{G\mEL^2}{\hbar c}\approx 1.79\cdot 10^{-45}$ and $\frac{G\mPR\mEL}{\hbar c}\approx 3.3\cdot 10^{-42}$; incidentally,
$\frac{G\mPR^2}{\hbar c}\approx 6.0\cdot 10^{-39}$.
 Also, $\frac{e^2}{\hbar c} \approx 1/137.036$ is Sommerfeld's fine structure constant.

 The known long-lived nuclei, for which $A(Z,N)\approx Z+N$ and $Z\leq A(Z,N) {\leq} 3Z$, are 
associated with the naked singularity sector of the RWN spacetimes, i.e. $f^2(r)>0\ \forall r>0$.

%%%%%%%%%%%%%%%%%%%%%%%%%%%%%%%%%%%%
%%%%%%%%%%%%%%%%%%%%%%%%%%%%%%%%%%%%
\subsection{The black hole regime}\label{sec:RWNbh}% \vspace{-5pt}
%%%%%%%%%%%%%%%%%%%%%%%%%%%%%%%%%%%%
%%%%%%%%%%%%%%%%%%%%%%%%%%%%%%%%%%%%
 The RWN spacetime features a black hole if there is at least one value of $r>0$ for which $f^2(r) =0$. 
  Since $f^2(r)$ is a quadratic polynomial in $1/r$, viz. $f^2(r)= \frac{1}{r^2} (r-r_+^{})(r-r_-^{})$, with the zeros formally given by 
\begin{equation} \label{rpm}
 r_\pm = \tfrac{GM\mEL}{\hbar c} \left( 1\pm \sqrt{1- \tfrac{Z^2e^2}{GM^2}}\right) ,
 \end{equation}
and those are real {if and only if} $\tfrac{Z^2e^2}{GM^2}\leq 1$. 
 If $\tfrac{Z^2e^2}{GM^2} = 1$, one says the asymptotically flat spacetime contains
an \emph{extremal} black hole; if $\tfrac{Z^2e^2}{GM^2} < 1$, the asymptotically flat spacetime contains a \emph{subextremal} black hole.
In the extremal case, $r_+^{}=r_-^{} = \frac{GM\mEL}{\hbar c} =:r_0^{}$, and then
\be \label{r0}
f^2(r) = \left(1-\frac{GM\mEL}{\hbar c}\frac1r\right)^2 = \frac{1}{r^2}\left(r-r_0^{}\right)^2.
\ee

 Continuing an asymptotically flat RWN black-hole spacetime analytically, one finds two static regions: 
either $r> r_+^{}$ or $r<r_-^{}$; this is true even for the extremal case when $r_+^{} = r_-^{}$.
 The maximal analytically extended spacetime even has infinitely many copies of such regions.
 We will be concerned with spacetimes given by one copy of the inner static region.

%%%%%%%%%%%%%%%%%%%%%%%%%%%%%%%%%%%%
%%%%%%%%%%%%%%%%%%%%%%%%%%%%%%%%%%%%
%%%%%%%%%%%%%%%%%%%%%%%%%%%%%%%%%%%%
\section{The Dirac operators}\label{sec:Dirac}\vspace{-10pt}
%%%%%%%%%%%%%%%%%%%%%%%%%%%%%%%%%%%%
%%%%%%%%%%%%%%%%%%%%%%%%%%%%%%%%%%%%
%%%%%%%%%%%%%%%%%%%%%%%%%%%%%%%%%%%%

 In this section we formulate the Dirac operator for a test electron with or without anomalous magnetic moment in the RWN spacetime of a naked 
point nucleus.
 For the sake of definiteness, we will define the electrons' anomalous magnetic moment 
as identical to its highly accurate approximation $\mu_a = -\mu_{\text{class}}$. 
 However, we multiply $\mu_a$ by an amplitude $\far$: for $\far=0$ we obtain the Dirac operator for a point electron 
without anomalous magnetic moment, whereas $\far=1$ if the electron's anomalous magnetic moment is taken into account. 
 By varying $\far$ continuously we can inquire into the threshold for essential self-adjointness.

 Electrons with wave function restricted to the region  $r> r_+^{}$ on the subextremal RWN black hole spacetime
were studied in \cite{CP}; no bound states exist, then.
 We will investigate electrons with wave function restricted to the region $r<r_-^{}$ on the subextremal RWN black hole spacetime, 
i.e. to the spacetime given by \eqref{dsSQR}, \eqref{fOFr}, with $(t,r,\vartheta,\varphi)\in \Rset\times (0,r_-^{})\times[0,\pi]\times[0,2\pi)$.
 We will also study wave functions on the extremal RWN black hole spacetime, supported inside the event horizon at $r_0^{}\;(=r_-^{}=r_+^{})$.
 For the purpose of comparison, we will also recall the Dirac operator defined on the naked singularity sector. 

 Due to the spherical symmetry and static character of the spacetimes, the 
Dirac operator $H$ of a test electron in the curved space whose line element $\dd s^2$ is given by (\ref{dsSQR}) 
separates in the spherical coordinates and their default Cartan frame \cite{CP}. 
 More precisely,  $H$ is a direct sum of so-called radial partial-wave Dirac operators
$H^{\mbox{\tiny{rad}}}_{k} := \mEL c^2 K^{\far}_{k}$, $k\in\Zset\backslash\{0\}$, with
\begin{align}\label{Kk} 
K_{k}^{\far}
:= 
\left[\begin{array}{cc} f(r) - Z\alphaS\frac{1}{r} & \left[\frac{k}{r} -Z\alphaS^2\frac{\far}{4\pi} \frac{1}{r^2}\right] f(r) -f^2(r)\tfrac{d}{dr} 
\\
\left[\frac{k}{r}  -Z\alphaS^2\frac{\far}{4\pi} \frac{1}{r^2}\right] f(r) + f^2(r) \tfrac{d}{dr} 
& - f(r) - Z\alphaS\frac{1}{r} 
	\end{array}\right],
\end{align}
which act on two-dimensional bi-spinor wave function subspaces.
 The spectrum of $H$ is the union of the spectra of the $H^{\mbox{\tiny{rad}}}_{k}$. 
 This reduces the problem to studying the spectrum of $K^{\far}_{k}$.\vspace{-10pt}

%%%%%%%%%%%%%%%%%%%%%%%%%%%%%%%%%%%%
%%%%%%%%%%%%%%%%%%%%%%%%%%%%%%%%%%%%
\subsection{Point nucleus as naked singularity of static spacetime}\label{sec:DiracRWNnaked} \vspace{-5pt}
%%%%%%%%%%%%%%%%%%%%%%%%%%%%%%%%%%%%
%%%%%%%%%%%%%%%%%%%%%%%%%%%%%%%%%%%%
 In this case the bi-spinor wave functions are supported on $\Rset^3$ minus a point. 
 The radial Hilbert space consists of pairs $g(r):=\big(g_1(r) , g_2(r)\big)^T$ equipped with a weighted $L^2$ norm given~by
\be \label{fnormNAKED}
\| g\|^2 := \int_{0}^{\infty} \frac{1}{f^2(r)}  \Big( |g_1(r)|^2 + |g_2(r)|^2 \Big) \dd r. 
\ee

 As mentioned in the introduction, Cohen and Powers \cite{CP} proved that for $\far=0$ the Dirac Hamiltonian is not 
essentially self-adjoint, but has uncountably many self-adjoint extensions. 
 Belgiorno et al. \cite{BMB} subsequently showed that $H$ is essentially self-adjoint on the domain of $C^\infty$ bi-spinor
wave functions which are compactly supported away from the singularity at $r=0$ whenever $\far$ is large enough, viz. if
$\far \mu_{\text{class}} \geq \frac{3}{2} \frac{ \sqrt{G} \hbar}{c}$.

%%%%%%%%%%%%%%%%%%%%%%%%%%%%%%%%%%%%
%%%%%%%%%%%%%%%%%%%%%%%%%%%%%%%%%%%%
\subsection{Point nucleus as singularity in static interior of black hole spacetime}\label{sec:DiracRWNbh}% \vspace{-5pt}
%%%%%%%%%%%%%%%%%%%%%%%%%%%%%%%%%%%%
%%%%%%%%%%%%%%%%%%%%%%%%%%%%%%%%%%%%

 In this case the bi-spinor wave functions are supported on $\Sset^2\times(0,r_-^{})$.
 The radial Hilbert space consists of pairs $g(r):=\big(g_1(r) , g_2(r)\big)^T$ equipped with a weighted $L^2$ norm given~by
\be \label{fnormBH}
\| g\|^2 := \int_{0}^{r_-^{}} \frac{1}{f^2(r)}  \Big( |g_1(r)|^2 + |g_2(r)|^2 \Big) \dd r. 
\ee
 In the extremal case,
$r_-^{}=r_0^{}$.

 We change variables $r\mapsto x$ such that
\begin{align} \label{eq:cv}
f^2(r) \frac{d}{dr}= \frac{d}{dx},
\end{align}
with $x=0$ when $r=0$, which for the subextremal sector yields
\be\label{xOFrSE}
 x = r + \frac{r_+^2}{r_+^{}-r_-^{}}\ln\left(1-\frac{r}{r_+^{}}\right) -  \frac{r_-^2}{r_+^{}-r_-^{}}\ln\left(1-\frac{r}{r_-^{}}
\right);
\qquad r < r_-^{};
\ee
in the extremal limit $r_-^{}\nearrow {r_0^{}} \ \&\ r_+^{}\searrow {r_0^{}}$ this becomes
\be\label{xOFrEX}
 x = r_0^{}\left[ \frac{1}{1-\frac{r}{r_0^{}}} + 2\ln\left(1-\frac{r}{r_0^{}}\right) - \left(1-\frac{r}{r_0^{}}\right) \right], \qquad r < r_0^{}. 
\ee

 Note that $x\to\infty$ when $r\nearrow r_-^{}$, respectively when $r\nearrow r_0^{}$.
 This maps $K_k^{\far}$ into 
\begin{align} \label{Ktilde}
\widetilde{K}_k^{\far}  &= 
\left[\begin{array}{cc} f(r(x)) - Z\alphaS\frac{1}{r(x)} & \left[\frac{k}{r(x)} -Z\alphaS^2\frac{\far}{4\pi} \frac{1}{r^2(x)}\right] f(r(x)) -\tfrac{d}{dx} 
\\
\left[\frac{k}{r(x)}  -Z\alphaS^2\frac{\far}{4\pi} \frac{1}{r^2(x)}\right] f(r(x)) + \tfrac{d}{dx} 
& - f(r(x)) - Z\alphaS\frac{1}{r(x)} 
	\end{array}\right],\\
	 & =: 
 \left[\begin{array}{cc}  a(x) - b(x)  & kc(x) -\far d(x) - \frac{d}{dx} \\  kc(x) -\far d(x) + \frac{d}{dx}&  -a(x) - b(x) 
	\end{array}\right]  \label{Ktildeab}
\end{align}
with the inner product 
\begin{align}\label{inner}
 \la g,h   \ra =   \int_{0}^{\infty} \Big( g_1(r(x)) \bar{h}_1(r(x)) + g_2(r(x)) \bar{h}_2(r(x)) \Big) dx {.}
\end{align}

%%%%%%%%%%%
\subsubsection{Electron without anomalous magnetic moment}\label{sec:noANOM}
%%%%%%%%%%%

\begin{thm} \label{th:self}
 The operator $\widetilde{K}_k^{0} $ given by (\ref{Ktilde}) with ${\far}=0$ has uncountably
many self-adjoint extensions for both the subextremal and the extremal black-hole sector.  
\end{thm}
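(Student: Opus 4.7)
The plan is to compute the deficiency indices of the minimal symmetric operator defined by $\widetilde{K}_k^0$ on $C_c^\infty((0,\infty);\Cset^2)$ via the Weyl limit-point/limit-circle alternative applied separately at the two endpoints $x=0$ (the singularity at $r=0$) and $x=\infty$ (the inner horizon $r=r_-^{}$ in the subextremal case, resp.\ $r=r_0^{}$ in the extremal case), and then to invoke von Neumann's extension theorem. I aim to show limit-circle at $x=0$ and limit-point at $x=\infty$, giving deficiency indices $(1,1)$ and hence a $U(1)$-worth of self-adjoint extensions, which is uncountable.

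First I would collect the endpoint asymptotics of the coefficients $a,b,c$ in \eqref{Ktildeab}. From $f^2(r)\sim\gamma/r^2$ as $r\to 0$, with $\gamma:=\tfrac{G\mEL^2}{\hbar c}\tfrac{Z^2e^2}{\hbar c}$, equation \eqref{eq:cv} yields $r(x)\sim(3\gamma x)^{1/3}$ as $x\to 0^+$, so $a(x),b(x)=O(x^{-1/3})$ while the more singular $c(x)\sim\sqrt{\gamma}\,(3\gamma x)^{-2/3}$ has the crucial property that its antiderivative $\int_0^x c(y)\,dy=O(x^{1/3})$ is bounded. At the opposite endpoint, \eqref{xOFrSE}/\eqref{xOFrEX} give $f(r(x))\to 0$ exponentially (subextremal) or like $1/x$ (extremal), so $a(x),c(x)\to 0$ while $b(x)\to b_\infty:=Z\alphaS/r_-^{}$, resp.\ $Z\alphaS/r_0^{}$. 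Rewriting $\widetilde{K}_k^0 g=\lambda g$ as the first-order system
\begin{equation*}
\begin{pmatrix}g_1\\ g_2\end{pmatrix}'=\begin{pmatrix}-kc(x)&a+b+\lambda\\ a-b-\lambda&kc(x)\end{pmatrix}\begin{pmatrix}g_1\\ g_2\end{pmatrix},
\end{equation*}
near $x=0$ with $\lambda=\pm i$ the diagonal leading system has bounded fundamental matrix $\mathrm{diag}(e^{-k\int c},e^{k\int c})$, and the off-diagonal perturbation lies in $L^1(0,\epsilon)$, so a contraction-mapping argument in $C([0,\epsilon];\Cset^2)$ produces two linearly independent bounded, hence $L^2$, solutions---the limit-circle case. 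Near $x=\infty$, the leading constant matrix has eigenvalues $\pm i(b_\infty+\lambda)$, giving one exponentially growing and one exponentially decaying mode for $\lambda=\pm i$; a Levinson-type argument on the subleading terms preserves exactly one $L^2$ solution---the limit-point case.

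The combined deficiency indices are thus $(1,1)$, and von Neumann's theorem exhibits a one-parameter family of self-adjoint extensions, parametrized by $U(1)$. The main technical obstacle is the limit-point analysis at $x=\infty$ in the extremal case: since $f(r(x))\sim 1/x$ there, the perturbation from the constant-coefficient limit is only $O(1/x)$, not integrable at infinity, so the elementary version of Levinson's theorem does not apply. One must either diagonalize the leading matrix and invoke a refined Hartman--Wintner-type asymptotic for Dirac systems (such as the one developed in \cite{WeiC}), or else construct the decaying solution directly by a shooting argument with exponential weights; either route should still deliver exactly one $L^2$ solution at $\infty$ and close the deficiency-index count.
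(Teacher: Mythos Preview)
Your approach via the Weyl limit-point/limit-circle alternative and von Neumann's theorem is sound and reaches the same deficiency indices $(1,1)$ that the paper records (in a remark immediately after its proof), but the paper takes a genuinely different route. Rather than classifying the endpoints, it works directly with the boundary sesquilinear form $[g,h]=\langle \widetilde{K}_k^{0*}g,h\rangle-\langle g,\widetilde{K}_k^{0*}h\rangle$ on the maximal domain: using an integrated representation of adjoint-domain elements (your $L^1$ observation for $a,b$ near $0$ and the boundedness of $\int_0^x c$ both enter there) together with $L^2$ membership to kill the contribution at $\infty$, it reduces the form to $[g,h]=g_2(0)\overline{h_1(0)}-g_1(0)\overline{h_2(0)}$, and then invokes Weidmann's characterization (Theorem~4.1 in \cite{WeiC}) to exhibit the self-adjoint extensions concretely as the restrictions of the adjoint to $\cD_\theta=\{g:g_1(0)\sin\theta+g_2(0)\cos\theta=0\}$, $\theta\in[0,\pi)$. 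The paper's route is more constructive---it names the extensions by explicit boundary conditions at the singularity, and those same domains $\cD_\theta$ are then reused verbatim in the subsequent spectral theorems---whereas yours is cleaner conceptually and makes the $(1,1)$ count transparent without ever writing down a boundary condition.

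On your extremal-case worry at $x=\infty$: you can bypass the non-$L^1$ perturbation entirely. For one-dimensional Dirac systems whose coefficients are bounded in a neighborhood of $\infty$ the endpoint is automatically limit-point (this is standard; see \cite{WeiC}), and here $a,b,c$ all have finite limits as $x\to\infty$ in both the subextremal and the extremal case. So no Levinson or Hartman--Wintner machinery is needed, and your deficiency-index argument closes uniformly in both sectors.
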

\begin{proof}
 We use the strategy of \cite{CP}, \cite{KTT} for the naked singularity spacetimes.
 We start with the subextremal case.
 Note that, with the change of variable \eqref{xOFrSE} one has\footnote{Here, ``$f(x) \sim g(x)$ as $x\to x_\ast$'' 
means $\exists C>0$ such that $f(x)/g(x) \to C$ as $x\to x_\ast$, where $x_\ast =0$~or~$\infty$.}
$(r-r_-^{})\sim e^{-\varkappa x}$ where $\varkappa = {\frac{r_+-r_-}{r_-^2}}$ as 
$ x\rightarrow \infty$ and  $ r \sim x^{1/3}$ as $x \rightarrow 0$. 
 Therefore, in the subextremal case the operator \eqref{Ktildeab} features $a(x) \sim x^{-1/3}$, $b(x) \sim x^{-1/3}$ 
and $c(x) \sim x^{-2/3}$ as $ x \rightarrow 0$. 
 Furthermore, as $x\rightarrow \infty$, we have $f(r(x)) \sim e^{-\varkappa x} $ as well as $ a(x), c(x) \sim e^{-\varkappa x}$,  and
$b(x) \rightarrow \frac{ Z \alpha_s}{r_-} $.

 Let  $\widetilde{K}^{0*}_k$ be the adjoint operator of $\widetilde{K}^0_k$. 
 The domain $\cD(\widetilde{K}^0_k)$ comprises all $C^{\infty}$ functions of compact support in $(0,\infty)$,
 and $\cD(\widetilde{K}^{0*}_k)$ includes the functions $f$ which together with $f^{\prime}$ are integrable in any compact subset of $[0,\infty)$.  
 On $\cD(\widetilde{K}^{0*}_k)$ we now define the sesquilinear form
\be
 [g, h]: = \la \widetilde{K}^{0*}_k g, h \ra - \la g, \widetilde{K}^{0*}_k h \ra ,
\ee
with $\la \cdot, \cdot  \ra$ defined in \eqref{inner}. 
 By Theorem~4.1 in \cite{WeiC}, $ \widetilde{K}^{0*}_k |_{\cD}$ is a self-adjoint extension of $\widetilde{K}^0_k$ iff 
  \begin{enumerate} 
 \item[$i)$]$ \cD(\widetilde{K}^0_k) \subset  \cD \subset \cD( \widetilde{K}^{0*}_k)$
\item[$ii)$]  $[g, h] =0$ for all $g,h \in \cD$ 
 \item[$iii)$] if $g \in \cD( \widetilde{K}^{0*}_k)$ and  $[g, h] =0$ holds for every $h \in \cD$ then $g \in \cD$. 
 \end{enumerate} 

 Now consider the spaces in which $[g,h]=0$. 
 Take $g\in \cD(\widetilde{K}^{0*}_k)$ so that $ \widetilde{K}^{0*}_k g  = \psi $ for some $\psi \in L^2([0, \infty)$. 
 Since $ \cD({\widetilde{K}}_k^0) \subset C_c^{1}([0, \infty))$, $g \in AC([{x_1,x_2}])$ for each $ 0 \leq {x_1<x_2}< \infty$, and 
so we can integrate to obtain
\begin{align}\label{ga}
g_1(x) = e^{{-}\nu(x)} \Big( g_1(0) {+} \int_0^{x} e^{{+}\nu(y)} [(  a(y) {+}  b(y) )g_2(y) + \psi_2(y)] dy \Big), \\
g_2(x) = e^{{+}\nu(x)} \Big( g_2(0) + \int_0^{x} e^{{-}\nu(y)} [(  a(y) {-}  b(y) )g_1(y) {-} \psi_1(y)] dy \Big) 
\label{gb}
\end{align}
for each $ x \leq {x_2} < \infty$, where $ \nu(x) = \int_0^{x} kc(y) dy \sim x^{1/3} \rightarrow 0$ 
as  $x \rightarrow 0$.
 We also need   $b(x), a(x) \in L^2 ([0, {x_2}])$, ${x_2} < \infty$, for $g_1$, $g_2$ to be defined.
 Integration by parts yields
\begin{align}
[g, h] 
& = \lim_{{x_1} \rightarrow 0} \lim_{{x_2} \rightarrow \infty}  \int_{{x_1}}^{{x_2}} {\Big(} ( \widetilde{K}^{0*}_k g)_1 \overline{h_2} - 
(\widetilde{K}^{0*}_k g)_2 \overline{h_1} + g_1 \overline{(\widetilde{K}^{0*}_k h)_2} - g_2 \overline{(\widetilde{K}^{0*}_k h)_1} \Big) 
{dx}
\\
& = \lim_{{x_1} \rightarrow 0} \lim_{{x_2} \rightarrow \infty} \Big[ g_1({x_2}) \overline{h_2({x_2})} - g_2({x_2}) \overline{h_1({x_2})} -  
g_1({x_1}) \overline{h_2({x_1})} + g_2({x_1})\overline{h_1({x_1})} \Big] \\   & = g_2(0) \overline{h_1(0)} - g_1(0) \overline{h_2(0)}.
\end{align}
 To obtain the last equality we used \eqref{ga}, \eqref{gb}, and the fact that $g,h \in L^2([0,\infty))$.

 Thus any symmetric extension requires $g_2(0) \overline{h_1(0)} - g_1(0) \overline{h_2(0)}= 0$. 
 By taking $g=h$ one sees that this is possible iff one of $g_1(0)$ and $g_2(0)$ is a real multiple of the other.
 Therefore, 
\begin{align} \label{Domdef}
{\widetilde{K}}^0_{k;\theta} :=  
\widetilde{K}^{0*}_k| \cD_\theta,\,\, \text{where}  \,\ 
\cD_\theta = \{g\in \cD(\widetilde{K}^{0*}_k): g_1(0) \sin \theta + g_2(0) \cos \theta =0\} ,
\end{align}
gives a symmetric extension for any $0 \leq \theta < \pi$, cf. \cite{CP}. 
 Note that $\cD_\theta$ satisfies both the conditions $i)$ and $ii)$.
 To see that condition $iii)$ is also satisfied, let $h \in \cD_\theta$, then
  \begin{align}
 [g,h] =0 \iff 
g_2(0) \overline{h_1(0)} - g_1(0) \overline{h_2(0)} =0 \iff 
\frac{\overline{h_2(0)}}{\overline{h_1(0)} } = \frac{g_2(0)}{ g_1(0)} = - \tan \theta \in \R.
  \end{align} 
 This completes the proof for the subextremal case (cf. the proof of Thm.3.6 in \cite{KTT}).
 
 For the extremal case, we consider the change of variable \eqref{xOFrEX} and consider the operator \eqref{Ktildeab}. 
Note that the above argument is valid if  $ \nu(x) = \int_0^{x} kc(y) dy \rightarrow 0$ as  $x \rightarrow 0$; 
$b(x), a(x) \in L^2 ([0, {x_2}])$, ${x_2} < \infty$.

 One can easily see that $c(x) \sim x^{-2/3}$ as $ x \rightarrow 0$ also in the extremal case. 
Furthermore, $b(x), a(x) \sim x^{1/3}$ as $x \rightarrow 0$, and both are continuous in $[0, b]$. 
 Hence, the proof applies similarly.    \end{proof}

\begin{rmk}
 We remark that the deficiency indices of ${\widetilde{K}}_{k}^0$ are $(1,1)$, for 
$[g,h]$ is the difference of two positive rank-one bilinear forms.
 This already implies that an orbit of self-adjoint extensions must exist.
 The proof of Thm.3.1 identifies these.
\end{rmk}

 Our next theorem identifies the essential spectrum of any self-adjoint extension of the Dirac operator acting on bi-spinor wave-functions supported 
inside the inner horizon of either the subextremal or the extremal black-hole spacetime. 
\begin{thm}\label{essK} For each $\theta$, one has $\sigma_{ess} ( {\widetilde{K}}_{k;\theta}^0)= \mathbb{\R}$.
\end{thm}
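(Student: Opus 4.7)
The plan is to prove $\sigma_{ess}(\widetilde{K}_{k;\theta}^0) = \R$ for every admissible boundary angle $\theta$ by constructing, for each $\lambda \in \R$, an explicit Weyl singular sequence supported in a neighbourhood of $x = +\infty$. Because every such sequence is $C^\infty$ with compact support bounded away from $x = 0$, it lies in the minimal domain $\cD(\widetilde{K}_k^0)\subset\cD_\theta$; in particular the boundary condition $\theta$ never enters, which already explains why the essential spectrum is the same for every self-adjoint extension identified in Theorem~\ref{th:self}.

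First, I would extract the constant-coefficient limit of the operator at infinity. From the asymptotic information already compiled in the proof of Theorem~\ref{th:self} (together with the analogous extremal computation using \eqref{xOFrEX}), we have $a(x), c(x) \to 0$ and $b(x) \to b_\infty$ as $x \to \infty$, where $b_\infty := Z\alpha_s/r_-^{}$ in the subextremal case and $b_\infty := Z\alpha_s/r_0^{}$ in the extremal case; the decay is exponential in the subextremal case and only $O(1/x)$ in the extremal case. The formal limit operator is
\[
L_\infty \;=\; \begin{pmatrix} -b_\infty & -\tfrac{d}{dx} \\[2pt] \tfrac{d}{dx} & -b_\infty \end{pmatrix},
\]
whose matrix symbol in momentum $\xi$ has eigenvalues $-b_\infty \pm \xi$ with fixed unit eigenvector $v = \tfrac{1}{\sqrt{2}}(1, i)^T$ on the $+\xi$ branch. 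Letting $\xi$ range over $\R$ sweeps out every real $\lambda$.

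Now fix $\lambda \in \R$, set $\xi := \lambda + b_\infty$, pick $\chi \in C_c^\infty((1,2))$ with $\int |\chi|^2 \, dy = 1$, and define
\[
\psi_n(x) \;:=\; n^{-1/2}\,\chi\!\left(\tfrac{x - x_n}{n}\right) e^{i\xi x} v, \qquad x_n := n.
\]
Then $\|\psi_n\| = 1$, $\mathrm{supp}\,\psi_n \subset (2n, 3n) \subset (0,\infty)$ for all $n\geq 1$ (so $\psi_n \in \cD_\theta$), and $\psi_n \rightharpoonup 0$ weakly because its support escapes to $+\infty$. Splitting
\[
(\widetilde{K}_k^0 - \lambda)\psi_n \;=\; (\widetilde{K}_k^0 - L_\infty)\psi_n \;+\; (L_\infty - \lambda)\psi_n,
\]
the second summand is produced only by $\chi'$ and is $O(n^{-1})$ in $L^2$, while the first summand is multiplication by a matrix whose sup-norm on $\mathrm{supp}\,\psi_n$ is bounded by $\sup_{x\geq x_n}\bigl(|a(x)| + |k|\,|c(x)| + |b(x) - b_\infty|\bigr)$; this sup is $O(e^{-\varkappa x_n})$ in the subextremal case and $O(1/x_n) = O(n^{-1})$ in the extremal case. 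Hence $\|(\widetilde{K}_k^0 - \lambda)\psi_n\| \to 0$, and Weyl's criterion gives $\lambda \in \sigma_{ess}(\widetilde{K}_{k;\theta}^0)$.

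The only delicate point is the extremal case, where the perturbation $\widetilde{K}_k^0 - L_\infty$ decays merely as $1/x$ rather than exponentially, so the cutoff must be pushed far enough out that this $1/x$ factor beats the unit $L^2$ norm of $\psi_n$; taking $x_n \to \infty$ (any divergent sequence works, $x_n = n$ being the simplest) handles both cases uniformly. Since $\lambda \in \R$ was arbitrary and $\sigma_{ess}(\widetilde{K}_{k;\theta}^0)\subset\R$ automatically, this yields $\sigma_{ess}(\widetilde{K}_{k;\theta}^0) = \R$ for all $\theta$, $k\in\Z\setminus\{0\}$, and for both the subextremal and extremal cases.
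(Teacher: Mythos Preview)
Your proof is correct and takes a genuinely different, more direct route than the paper. The paper splits $\widetilde{K}_k^0 = \widetilde{K}_k^{00} + V$, where $\widetilde{K}_k^{00}$ keeps the off-diagonal $kc(x)$ terms but replaces the diagonal potentials by the constant $-Z\alphaS/r_-$; it then runs a Weyl sequence for $\widetilde{K}_{k;\theta}^{00}$ (with the non-compactly supported profile $\tfrac{1}{2n^{3/2}}\,x\,e^{-x/(2n)+ixw}(1,-i)^T$) and invokes the Cohen--Powers compactness lemma (Lemma~\ref{cplem}), via a conjugation by the bounded matrix $S=\mathrm{diag}(e^{-\xi},e^{\xi})$, to show that $V$ is $\widetilde{K}_{k;\theta}^{00}$-compact. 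By contrast, you bypass the relative-compactness machinery entirely: your compactly supported, outward-moving wave packets are in the minimal domain $C_c^\infty(0,\infty)^2$, so membership in $\cD_\theta$ and independence of $\theta$ are automatic, and the only estimate you need is the pointwise decay of $a,c$ and $b-b_\infty$ on the support. Your argument is shorter and makes the $\theta$-independence transparent; the paper's route has the collateral benefit of establishing relative compactness of $V$, which is a stronger structural statement even though it is not reused later in the paper.
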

To prepare the proof of Theorem \ref{essK}, as in \cite{KTT} we recall the following lemma from \cite{CP}.
\begin{lemm}\label{cplem} Let 
\be
{D :}= \left[ \begin{array}{cc} 0 & - \frac{d}{dx} \\ \frac{d}{dx} & 0  \end{array} \right] 
\ee
{be defined on the $C^{\infty}$ two-component functions of compact support in the positive real half-line.
 Now take the closure of this operator in $(L^2(\Rset_+))^2$ with the boundary condition $f_1(0) \sin \theta + f_2(0) \cos \theta=0$ 
at $x=0$, {denoted} $D_\theta$.}
 Let $A$ be the operator 
\be
A = \left[ \begin{array}{cc}a_{11}(x) & a_{12}(x) \\  a_{21}(x) &  a_{22}(x)	  \end{array} \right] ,
\ee
where the $a_{ij}$ are functions in $L^2([0,b])$ for all $ 0 <b < \infty$ and $a_{ij}(x) \rightarrow 0$ as $ x \rightarrow \infty$. 
 Then $A$ is $D_{\theta}$ compact.  
\end{lemm}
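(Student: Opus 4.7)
The plan is to establish $D_\theta$-compactness of $A$ by proving that $A R_z$ is compact for a single $z$ in the resolvent set of $D_\theta$; take $z = i$, which is admissible since $D_\theta$ is self-adjoint (the boundary condition $f_1(0)\sin\theta + f_2(0)\cos\theta = 0$ kills the boundary form $u_1(0)\overline{v_2(0)} - u_2(0)\overline{v_1(0)}$ associated with $D$). Write $R_z := (D_\theta - z)^{-1}$; it is bounded on $(L^2(\Rset_+))^2$ and its range lies in the graph-norm domain of $D_\theta$, which, restricted to any finite interval $[0,R]$, sits inside $(H^1([0,R]))^2$ with $\|R_z f\|_{H^1([0,R])^2} \lesssim \|f\|_{L^2}$.

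The first main step is a cut-off in $x$. For $R > 0$ set $A_R(x) := A(x)\mathbf{1}_{[0,R]}(x)$ and $A^R := A - A_R$. Since each $a_{ij}(x) \to 0$ as $x \to \infty$, for any $\veps > 0$ we may pick $R$ so large that each entry of $A^R$ is uniformly less than $\veps$; hence $\|A^R R_z\|_{L^2 \to L^2} \le C\veps \|R_z\|$. Consequently $A R_z$ is the norm-limit of $A_R R_z$ as $R \to \infty$, and it suffices to prove that each $A_R R_z$ is compact, since norm-limits of compact operators are compact.

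The second main step factors $A_R R_z$ through a compact embedding. Write
\be
A_R R_z : (L^2(\Rset_+))^2 \xrightarrow{R_z} (H^1([0,R]))^2 \xrightarrow{\iota} (L^\infty([0,R]))^2 \xrightarrow{A_R \cdot} (L^2([0,R]))^2,
\ee
and extend by zero back into $(L^2(\Rset_+))^2$. The first arrow is bounded by the graph-norm estimate; the second, $\iota$, is the Rellich--Kondrachov embedding in one dimension and is compact; the third is bounded because each entry $a_{ij} \in L^2([0,R])$ and $\|a_{ij} u\|_{L^2([0,R])} \le \|a_{ij}\|_{L^2([0,R])} \|u\|_{L^\infty([0,R])}$. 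A composition of the form (bounded)$\,\circ\,$(compact)$\,\circ\,$(bounded) is compact, so $A_R R_z$ is compact, and hence so is $A R_z$.

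The main obstacle is the lack of $L^\infty$ control on $A$ near $x = 0$: the hypothesis gives only $a_{ij} \in L^2_{loc}$, so one cannot directly multiply by $A$ before passing to the compact operator. The essential mechanism overcoming this is the one-dimensional Sobolev embedding $H^1 \hookrightarrow L^\infty$ on bounded intervals, which upgrades the regularity supplied by $R_z$ from $H^1$ to uniform boundedness; only then does multiplication by an $L^2$ matrix produce an $L^2$ output. This is exactly the point at which the hypothesis "$a_{ij} \in L^2([0,b])$ for all finite $b$" (rather than $L^\infty_{loc}$) is used, and it is the only nontrivial ingredient beyond the standard split-and-compact-embed argument.
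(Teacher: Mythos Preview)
The paper does not supply its own proof of this lemma: it is stated with the preamble ``we recall the following lemma from \cite{CP}'' and then used as a black box in the proof of Theorem~\ref{essK}. So there is nothing in the paper to compare against directly.

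Your argument is correct. The self-adjointness of $D_\theta$ is easy to check, so $z=i$ is in the resolvent set; the range of $R_i$ lies in $(H^1(\Rset_+))^2$ with the graph-norm bound you state, since $D_\theta g - ig = f$ gives $\|g_j'\|_{L^2}\le \|f\|_{L^2}+\|g\|_{L^2}$. The tail truncation $A^R R_i\to 0$ in operator norm uses only $\sup_{x>R}|a_{ij}(x)|\to 0$, and for the compactly supported piece your factorization through the compact one-dimensional embedding $H^1([0,R])\hookrightarrow C([0,R])\subset L^\infty([0,R])$ followed by multiplication by an $L^2$ function is exactly the right device to accommodate the merely $L^2_{\mathrm{loc}}$ entries. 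Each step is valid, and norm-limits of compacts are compact, so $AR_i$ is compact and $A$ is $D_\theta$-compact.

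One cosmetic remark: what you invoke is really the one-dimensional Sobolev/Morrey embedding $H^1\hookrightarrow C^{0,1/2}$ combined with Arzel\`a--Ascoli, rather than Rellich--Kondrachov in the usual $H^1\hookrightarrow L^p$ sense; the conclusion is the same.
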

\begin{proof}[Proof of Theorem~\ref{essK}] 
 We prove the theorem explicitly for the subextremal case.
 Yet note that the extremal case follows verbatim after setting $r_-\to r_0$.

We split the operator ${\widetilde{K}}_{k}^0$ in \eqref{Ktilde} as
\begin{align}
{\widetilde{K}}_{k}^0  & = \left[ \begin{array}{cc} - \frac{Z \alpha_s}{r_-} & kc(x)- \frac{d}{dx}  \\ 
kc(x) + \frac{d}{dx}  & - \frac{Z \alpha_s}{r_-}
	  \end{array} \right] + \left[\begin{array}{cc} a(x) - \Big[b(x) - \frac{Z \alpha_s}{r_-} \Big] &   0 \\
  0 & - a(x) -  \Big[b(x) - \frac{Z \alpha_s}{r_-}\Big]
	\end{array}\right]  \\
&=:  {\widetilde{K}}^{00}_{k} + V .
\end{align}
 Note that Theorem~\ref{th:self} is valid when $a(x)=0$ and $b(x)= \frac{Z \alpha_s}{r_-}$.
 Therefore, ${\widetilde{K}}^{00}_{k}$ has deficiency indices $(1,1)$ and has multiple self-adjoint extensions similar to ${\widetilde{K}}_k^0$.
 We define these self-adjoint extensions as ${\widetilde{K}}^{00}_{k;\theta}$ similar to ${\widetilde{K}}^0_{k;\theta}$, cf. \eqref{Domdef}. 
 We define the following  Weyl sequence for ${\widetilde{K}}^{00}_{k;\theta}$: let $w = - \frac{Z \alpha_s}{r_-} - \lambda$, with
any $\lambda\in\Rset$, then
 \be \label{weyl}
f_{n, \lambda} (x) = \frac{1}{2 n^{\f 32}} x e^{- \frac{x}{2n} + i x w }
\left[{\scriptsize{\begin{array}{c} 1\\ -i 
	\end{array}}}\right];\quad n\in\Nset.
\ee
  We have that $ \| f_{n, \lambda} \|_{(L^2(\Rset_+))^2} =1$, $  f_{n, \lambda} (x) \rightarrow 0 $ weakly, and 
$\| ( {\widetilde{K}}^{00}_{k;\theta} - \lambda) f_{n, \lambda} (x) \|_{(L^2(\Rset_+))^2} \rightarrow 0 $ as $n\to\infty$.
 Hence, any $\lambda\in\Rset$ is in the essential spectrum of ${\widetilde{K}}^{00}_{k;\theta}$, and so $\sigma_{ess} 
({\widetilde{K}}^{00}_{k;\theta}) = \Rset$.

 Next we will show that $V$ is ${\widetilde{K}}^{00}_{k;\theta}$  compact. 
 This is done essentially verbatim to the pertinent part in the proof of Lemma 3.12 in \cite{KTT}.
 We define $ \xi(x) = - \int_0^x kc(y) dy$ for $0 \leq x \leq 1$ and $ \xi(x) = \xi(1)$ for $x>1$.
 Then the following matrix is bounded,
\be\label{S}
S= \left[ \begin{array}{cc} e^{-\xi(x)} &0 \\  0 &  e^{\xi(x)}
	  \end{array} \right]  .
\ee

 Assume that $\|g_{(n)}\|_{(L^2(\Rset_+))^2} , \| {\widetilde{K}}^{00}_{k;\theta} g_{(n)}\|_{(L^2(\Rset_+))^2} $, $n\in\Nset$,
 are bounded sequences.
 Then $\| Sg_{(n)} \|_{(L^2(\Rset_+))^2}$ and  $\| D_\theta S g_{(n)}\|_{(L^2(\Rset_+))^2}$ {are also bounded,
the first one is because $S$ is bounded and the latter one is by the fact that 
$ D_\theta S = S^{-1} S D_\theta S =  S^{-1} ({\widetilde{K}}^{00}_{k;\theta} +W)$ for some bounded $W$.}
 Moreover, one can check that $VS^{-1}$ is $D_\theta$ compact by Lemma~\ref{cplem}. 
 Hence, 
\be 
 VS^{-1} S g_{(n)} = Vg_{(n)}
\ee 
has a convergent subsequence. 
 This proves that $V$ is $D_\theta$, and ${\widetilde{K}}^{00}_{k,\theta}$ compact. 
\end{proof}

 We next show that the essential spectrum has no singular continuous part.
 As in \cite{KTT}, for this we will need the following Theorem from \cite{WeiB,WeiC}.
\begin{thm}\label{th:weid} (Weidmann) Let 
\be
\tau:=  \left[\begin{array}{cc}  0  & - \frac{d}{dx} \\   \frac{d}{dx} &  0
	\end{array}\right] + P_1(x) + P_2(x) 
\ee
be defined on $ (a , \infty)$. 
 Further assume that  $|P_1(x)| \in L^1 (c, \infty)$  for some $ c \in (a , \infty)$, and  $P_2(x)$ is of 
bounded variation in $[c, \infty)$ with 
\be 
\lim_{x \rightarrow \infty} P_2(x) = \left[\begin{array}{cc}  \fm_{+}  & 0 \\   0 &  \fm_{-}
	\end{array}\right] \,\,\,\,\,\ \text{for} \,\,\,\,\,\,  \fm_{-} \leq  \fm_{+} .
\ee
Then every self-adjoint realization $A$ of $\tau$ has purely absolutely continuous spectrum in 
$(-\infty, \fm_{-}) \cup (\fm_{+}, \infty)$. 
\end{thm}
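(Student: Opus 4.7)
The plan is to prove purely absolutely continuous spectrum on $I := (-\infty,\fm_{-})\cup(\fm_{+},\infty)$ by combining asymptotic integration for the first-order system $\tau f = \lambda f$ with Gilbert--Pearson subordinacy theory. The idea is that for every $\lambda \in I$ I would construct two linearly independent solutions with explicit oscillatory asymptotics at $\infty$; neither being subordinate, a standard Gilbert--Pearson argument then rules out both eigenvalues and singular continuous spectrum in $I$ for every self-adjoint realization $A$, independently of the boundary condition imposed at $a$ (since the subordinacy criterion depends only on the behavior of solutions at $\infty$).

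First I would treat the constant-coefficient limit. Writing $\sigma$ for the first-order Dirac matrix appearing in $\tau$, the free system $\bigl(\sigma \partial_x + \operatorname{diag}(\fm_+,\fm_-)\bigr)f = \lambda f$ reduces, after one differentiation, to $f_1'' + k^2(\lambda)\, f_1 = 0$ with $k^2(\lambda) := (\lambda-\fm_-)(\lambda-\fm_+) > 0$ on $I$, and is therefore solved by the pair of oscillatory exponentials $e^{\pm i k(\lambda) x}$; after a constant orthogonal change of frame one obtains vector solutions $e^{\pm i k x} v_\pm$ with fixed eigenvectors $v_\pm$.

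Next I would incorporate $P_1$ and $P_2$ by two standard perturbative devices. For the $L^1$ piece $P_1$, a Levinson-type theorem applies: iterating the Volterra integral equation
\[
f(x) = f_0(x) + \int_x^{\infty} G(x,y)\, P_1(y)\, f(y)\, dy ,
\]
where $G$ is the Green kernel of the diagonalized free system and $f_0$ one of the free solutions, produces solutions $f_\pm(x) = e^{\pm i k x}\bigl(v_\pm + o(1)\bigr)$ as $x \to \infty$. The bounded-variation piece $P_2(x) - P_2(\infty)$ is typically not integrable, and for it I would apply a Harris--Lutz transformation $f = (I + Q(x))\, g$, where $Q(x)$ is essentially an antiderivative of the off-diagonal part of $P_2$ in the diagonalizing basis. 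An integration by parts exploiting $\int_c^\infty |dP_2| < \infty$ converts the equation for $g$ into a diagonal system plus a genuinely $L^1$ perturbation, to which Levinson's theorem again applies.

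The main obstacle will be the simultaneous bookkeeping of $P_1$ and $P_2$: ensuring that the Harris--Lutz transformation is well defined and invertible for all sufficiently large $x$, that the combined iterative Volterra series converges, and that the asymptotics $f_\pm(x) \sim e^{\pm i k(\lambda) x} v_\pm$ hold locally uniformly in $\lambda \in I$ (uniformity is essential because Gilbert--Pearson requires non-subordinacy on a full set of $\lambda$). Once the two WKB-type solutions are secured, $\|f_\pm\|_{L^2([n,n+1])}^2 \to |v_\pm|^2 > 0$ as $n \to \infty$, so neither solution is subordinate at $\infty$, and the Gilbert--Pearson theorem immediately yields purely absolutely continuous spectrum on $I$ for every self-adjoint realization of $\tau$.
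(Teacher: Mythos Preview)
The paper does not prove this theorem; it is quoted as a known result of Weidmann (the text introduces it with ``we will need the following Theorem from \cite{WeiB,WeiC}'') and is used as a black box in the proofs of Propositions~\ref{acK} and~\ref{acK1}. So there is no proof in the paper to compare your attempt against.

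That said, your sketch is a reasonable outline of how one establishes such a result. The core mechanism --- reducing to a constant-coefficient system at infinity, handling the $L^1$ piece by a Levinson/Volterra iteration and the bounded-variation piece by a Harris--Lutz (or Pr\"ufer-type) transformation, and concluding via absence of subordinate solutions --- is standard and correct in spirit. Two minor remarks: (i) Weidmann's original 1982 argument predates Gilbert--Pearson subordinacy and instead feeds the plane-wave asymptotics directly into the Weyl--Titchmarsh $m$-function to read off absolute continuity of the spectral measure; the subordinacy route you propose is a legitimate modern alternative. (ii) Your claim that ``the subordinacy criterion depends only on the behavior of solutions at $\infty$'' is slightly glib: the theorem covers all self-adjoint realizations, so if the left endpoint $a$ is limit-circle you also need to know that no nontrivial real solution is subordinate at $\infty$, which is exactly what your asymptotics give, but the independence from the boundary condition at $a$ should be stated as a consequence of that, not as a general feature of subordinacy.
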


\begin{prop} \label{acK} 
In both the subextremal and the extremal case, 
$\mathbb{R}\setminus\{ -\frac{\alpha_s Z} {r_-}\} \subset \ \sigma_{ac}({\widetilde{K}}_{k;\theta}^0)$.
\end{prop}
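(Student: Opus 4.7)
The plan is to apply Weidmann's Theorem~\ref{th:weid} to the symbol of $\widetilde{K}_{k;\theta}^0$. Starting from the form (\ref{Ktildeab}), I would write
\begin{equation*}
\widetilde{K}_{k;\theta}^0 = \left[\begin{array}{cc} 0 & -\tfrac{d}{dx} \\ \tfrac{d}{dx} & 0 \end{array}\right] + P_2(x),\qquad P_2(x) := \left[\begin{array}{cc} a(x) - b(x) & kc(x) \\ kc(x) & -a(x) - b(x) \end{array}\right],
\end{equation*}
taking $P_1(x)\equiv 0$. Once I verify that $P_2$ has bounded variation on $[c,\infty)$ for some $c$ large, and tends to a scalar multiple of the identity at infinity, Weidmann's theorem yields purely absolutely continuous spectrum on $\Rset$ minus that scalar value --- for every self-adjoint realization of the symbol, in particular for $\widetilde{K}_{k;\theta}^0$ itself for any $\theta$.

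The main work is the verification of the hypotheses on $P_2$, which I would carry out uniformly in the subextremal and extremal cases as follows. From $dr/dx=f^2(r)>0$ on $(0,r_-)$, the map $r(x)$ is strictly increasing with $r(x)\nearrow r_-$ as $x\to\infty$ (writing $r_-=r_0$ in the extremal case). On a one-sided neighborhood of $r=r_-$, each of $f(r)$, $f(r)/r$, and $1/r$ is strictly monotone in $r$: this is trivial for $1/r$, while for $f$ and $f/r$ it follows in the subextremal case from the factorization $f^2(r)=(r_+-r)(r_--r)/r^2$, giving $f(r)\sim\sqrt{(r_+-r_-)(r_--r)}/r_-$ near $r_-$, and in the extremal case from the explicit identity $f(r)=r_0/r-1$ on $(0,r_0)$. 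Composing with $r(x)$ then makes $a(x)$, $c(x)$, and $b(x)$ each eventually monotone as $x\to\infty$, with $a,c\to 0$ and $b\to Z\alpha_s/r_-$. Hence each entry of $P_2$ is a fixed linear combination of bounded, eventually monotone scalar functions, so $P_2$ has bounded total variation on $[c,\infty)$, with
\begin{equation*}
\lim_{x\to\infty}P_2(x) = -\tfrac{Z\alpha_s}{r_-}\,I,
\end{equation*}
and therefore $\fm_+=\fm_-=-Z\alpha_s/r_-$. Theorem~\ref{th:weid} then delivers purely absolutely continuous spectrum on $(-\infty,-Z\alpha_s/r_-)\cup(-Z\alpha_s/r_-,\infty)=\Rset\setminus\{-Z\alpha_s/r_-\}$, which is the claimed inclusion.

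The delicate point is the extremal case, where $a(x),c(x)\sim 1/x$ at infinity --- these decay too slowly to be absorbed into an $L^1$ contribution $P_1$ as one would naturally do in the subextremal case, where the decay is exponential. The uniform remedy is to place the entire potential matrix into $P_2$ and exploit the fact that Weidmann's theorem only requires bounded variation plus a diagonal limit of $P_2$, not $L^1$ decay of any non-constant part. Everything else is then a straightforward monotonicity check on the explicit coefficients $a$, $b$, $c$.
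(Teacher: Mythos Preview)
Your proof is correct and follows essentially the same route as the paper: apply Weidmann's theorem with $P_1\equiv 0$ and the full potential in $P_2$, check that $P_2$ has bounded variation on a neighborhood of infinity, and compute $\lim_{x\to\infty}P_2(x)=-\tfrac{Z\alpha_s}{r_-}I$. Your eventual-monotonicity argument for bounded variation is in fact slightly more careful than the paper's, which just invokes continuous differentiability of $f(r(x))$ and $r(x)$.
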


\begin{proof} 
Recall that {${\widetilde{K}}^{0}_{k,\theta}$} is in the form of $\tau$, with $P_1(x) =0$, and with
$P_2(x) = P_2^\far(x)$ for $\far=0$, where
\begin{align} \label{def:p2}
{P_2^\far(x) =  \left[\begin{array}{cc} f(r(x)) - Z\alphaS\frac{1}{r(x)} & \left[\frac{k}{r(x)} -Z\alphaS^2\frac{\far}{4\pi} \frac{1}{r^2(x)}\right] f(r(x))
\\
\left[\frac{k}{r(x)}  -Z\alphaS^2\frac{\far}{4\pi} \frac{1}{r^2(x)}\right] f(r(x)) 
& - f(r(x)) - Z\alphaS\frac{1}{r(x)} 
	\end{array}\right];}
\end{align}
here, both $f(r(x))$ and $r(x)$ are continuously differentiable and hence of bounded variation. 
 Furthermore, 
\be
\lim_{x \rightarrow \infty} f(r(x))=0 , 
\,\,\,\,\mbox{and} \,\,\,\, 
  \lim_{x \rightarrow \infty} r(x) = r_*,
\ee 
where $r_*= r_-$ in the subextremal, and $r_*= r_0$ in the extremal case.
 This implies
\be \label{lim0}
\lim_{x \rightarrow \infty} P_2^\far(x) = \left[\begin{array}{cc}  -\frac{\alpha_s Z} {r_*} & 0 \\   0 &  -\frac{\alpha_s Z} {r_*} \end{array}\right],
\ \forall\ \far\geq 0. 
\ee
 Hence, the spectrum of ${\widetilde{K}}^0_{k;\theta}$ is purely absolutely continuous on $\mathbb{R}\setminus \{ -\frac{\alpha_s Z} {r_*} \}$.
\end{proof}
\begin{cor}\label{cor:sc} 
  The singular continuous spectrum $\sigma_{sc}(K_{k;\theta}^{\text{0}})= \emptyset$ in both the subextremal and the extremal case.
\end{cor}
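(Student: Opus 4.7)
The plan is to read off $\sigma_{sc} = \emptyset$ directly from the Weidmann criterion (Theorem~\ref{th:weid}) that was invoked in the proof of Proposition~\ref{acK}. Although the statement of Proposition~\ref{acK} records only the inclusion $\mathbb{R}\setminus\{-\alpha_s Z/r_*\}\subset\sigma_{ac}(\widetilde{K}_{k;\theta}^0)$, what Theorem~\ref{th:weid} actually delivers under the hypothesis \eqref{lim0} (with $\fm_-=\fm_+=-\alpha_s Z/r_*$) is the stronger assertion that the spectrum of every self-adjoint realization is \emph{purely} absolutely continuous on the open set $\mathbb{R}\setminus\{-\alpha_s Z/r_*\}$. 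In particular, the spectral measure associated with $\widetilde{K}_{k;\theta}^0$, restricted to that open set, carries no singular continuous mass and no pure point mass.

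The first step is therefore to record the consequence: $\sigma_{sc}(\widetilde{K}_{k;\theta}^0)\subset\{-\alpha_s Z/r_*\}$, where $r_*=r_-$ in the subextremal case and $r_*=r_0$ in the extremal case. The second and final step is the general structural fact that the singular continuous part of a Borel spectral measure is, by construction, non-atomic: any measure supported on a single point is a Dirac mass, which is classified as pure point, not singular continuous. Hence a singular continuous measure concentrated on the singleton $\{-\alpha_s Z/r_*\}$ must be the zero measure, and $\sigma_{sc}(\widetilde{K}_{k;\theta}^0)=\emptyset$. Since $K_{k;\theta}^0$ is related to $\widetilde{K}_{k;\theta}^0$ only by the unitary change of variables \eqref{eq:cv}, the conclusion transfers without modification to $K_{k;\theta}^0$, uniformly in $\theta\in[0,\pi)$, and in both the subextremal and extremal sectors.

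I do not anticipate any serious obstacle. The only point requiring a moment's care is the reading of Weidmann's phrase ``purely absolutely continuous spectrum in $(-\infty,\fm_-)\cup(\fm_+,\infty)$'': it must be used in the full sense that also excludes embedded singular continuous spectrum and embedded eigenvalues on that open set, rather than the weaker sense of merely containing the absolutely continuous spectrum. This is the standard reading of the result in \cite{WeiB,WeiC}, and is implicit already in the application made in Proposition~\ref{acK}, so the corollary is essentially a restatement of what has already been proved.
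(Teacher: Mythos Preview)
Your argument is correct and follows essentially the same route as the paper's own proof: both invoke the ``purely absolutely continuous'' conclusion of Weidmann's theorem on $\mathbb{R}\setminus\{-\alpha_s Z/r_*\}$ to confine $\sigma_{sc}$ to a single point, and then observe that a singular continuous spectral measure cannot be supported on a singleton. The paper additionally routes the first step through the inclusion $\sigma_{sc}\subset\sigma_{ess}$ and Theorem~\ref{essK}, but this is redundant once Weidmann's theorem is read in its full strength, as you do.
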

\begin{proof}
 By Proposition~\ref{acK} and Theorem~\ref{essK} the essential spectrum is the closure of $\sigma_{ac}(K_{k,\theta}^0)$. 
 Since the singular continuous spectrum is a subset of the essential spectrum, and since the interior of the
essential spectrum here is purely absolutely continuous, a non-empty $\sigma_{sc}(K_{k;\theta}^{\text{0}})$ 
would have to consist of the single point $-\alphaS Z/r_*$, which is impossible.
\end{proof}

 The results obtained so far show the absence of a discrete spectrum, but not the absence of point spectrum. 
 Obviously, any point spectrum would have to consist of a single eigenvalue, $-\alphaS Z/r_*$, which could be infinitely degenerate.
 We next show that in the subextremal case, $-\alphaS Z/r_*$ is not an eigenvalue.

 \begin{thm}\label{eigen}
   In the subextremal case  $ {\widetilde{K}}_{k;\theta}^0 $ has no eigenvalues.
\end{thm}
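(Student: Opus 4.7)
My plan is as follows. By Corollary~\ref{cor:sc} and Proposition~\ref{acK}, the only possible eigenvalue is $\lambda_\ast := -\alphaS Z/r_-$, so I only need to show that the eigenvalue equation $\widetilde{K}_k^{0}g = \lambda_\ast g$ has no nonzero solution in $L^2((0,\infty);\Cset^2)$ satisfying the boundary condition defining $\cD_\theta$. I would rewrite the eigenvalue equation as a first-order linear ODE system $g' = M(x) g$, where
\begin{equation*}
M(x)=\left[\begin{array}{cc} -kc(x) & h(x)+a(x) \\ -(h(x)-a(x)) & kc(x) \end{array}\right], \qquad h(x) := b(x) - \tfrac{Z\alphaS}{r_-}.
\end{equation*}
The point of subtracting $\tfrac{Z\alphaS}{r_-}$ is precisely that $\lambda_\ast$ cancels the nonvanishing asymptotic piece of $b(x)$, leaving a system whose coefficients all vanish at infinity.

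Next I would establish the asymptotic decay of $M(x)$ as $x\to\infty$. From \eqref{xOFrSE} one reads off $r_- - r(x) \sim r_- e^{-\varkappa x}$ with $\varkappa = (r_+-r_-)/r_-^2$. Since $f^2(r) = (r-r_+)(r-r_-)/r^2$ vanishes linearly at $r_-$, this gives $a(x) = f(r(x)) = O(e^{-\varkappa x/2})$, $c(x) = O(e^{-\varkappa x/2})$, and $h(x) = O(e^{-\varkappa x})$. In particular $\int_T^{\infty} \|M(x)\|\,dx \to 0$ as $T\to\infty$, which is the key integrability I will exploit.

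From here the argument is standard ODE theory. First, integrability of $\|M\|$ at infinity combined with Gronwall implies that every solution has a finite limit $g(\infty)\in\Cset^2$, so the linear map $\Phi: g \mapsto g(\infty)$ is well-defined from the $2$-dimensional solution space to $\Cset^2$. Second, I claim $\Phi$ is injective: if $g(\infty)=0$ then $g(x) = -\int_x^{\infty} M(s) g(s)\,ds$, so for $T$ large enough that $\int_T^{\infty}\|M\|ds < \tfrac12$ one obtains $\sup_{[T,\infty)}|g| \leq \tfrac12 \sup_{[T,\infty)}|g|$, hence $g\equiv 0$ on $[T,\infty)$, and backward uniqueness for the regular ODE system on $(0,\infty)$ propagates this to $g\equiv 0$ everywhere.

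To conclude: any putative $L^2$ eigenfunction must satisfy $g(\infty)=0$ (a nonzero limit is incompatible with square-integrability at $\infty$), and by injectivity of $\Phi$ this forces $g\equiv 0$. The boundary condition at $x=0$ therefore plays no role --- $\lambda_\ast$ is not an eigenvalue of \emph{any} self-adjoint extension $\widetilde{K}^0_{k;\theta}$. The main technical step is the asymptotic analysis of $r(x)$ via \eqref{xOFrSE} that yields the $L^1$-at-infinity bound on $M$; once that is in place, the bijection argument is straightforward. I do not expect any subtle obstacle, since the exponential closure of the horizon in the tortoise coordinate $x$ makes the "perturbation from constant coefficients" estimate comfortably integrable.
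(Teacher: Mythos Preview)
Your proposal is correct and follows essentially the same approach as the paper: reduce to the single candidate eigenvalue $-Z\alphaS/r_-$, observe that in the subextremal case the coefficient matrix of the resulting first-order system is $L^1$ at infinity (thanks to the exponential approach $r_- - r(x)\sim e^{-\varkappa x}$), and then invoke the Cohen--Powers lemma (which you reprove via Gronwall and a contraction argument) to conclude that any $L^2$ solution must vanish identically. The only cosmetic difference is that the paper first applies a variation-of-constants change of basis to absorb the $h(x)$ term into a real phase before applying the lemma, whereas you note directly that $h(x)=O(e^{-\varkappa x})$ and work with $g$ itself; both routes yield the same integrability and the same conclusion.
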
 
 For the proof of Theorem \ref{eigen} we will utilize the following Lemma of Cohen \&\ Powers \cite{CP}. 
\begin{lemm}\label{noeig} Let $\mathcal{H}$ be a Hilbert space. 
 Let $V_t$ for $t>a$ be a bounded linear operator on $\mathcal{H}$ so that $V_tf$ is 
continuous in $t$ for each $f \in \mathcal{H}$.
 Suppose $f(t)$ solves the differential equation  
\begin{align}
\frac{df(t)}{dt} = V_t f(t) ,\,\,\,\,\,\ \mbox{\rm{for}}\ t>a
\end{align}
where the derivative exists in the strong sense.
 Suppose $ \int_a^{\infty} \| V_t\| dt = C < \infty$.
  Then the $\lim_{t \rightarrow \infty} f(t)$ exists, and if this limit is zero then $f(t) =0 $ for all $t$. 
\end{lemm}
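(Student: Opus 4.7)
The plan is to prove the lemma in three movements: first bound $\|f(t)\|$ uniformly in $t$, then use the bound together with $\int_a^\infty \|V_t\|\,dt < \infty$ to deduce Cauchy convergence at infinity, and finally, if the limit vanishes, run a contraction-on-the-tail argument to force $f \equiv 0$.

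For the uniform bound, I would differentiate $\|f(t)\|^2$ to obtain
\begin{equation*}
\tfrac{d}{dt}\|f(t)\|^2 = 2\,\Re\langle V_t f(t), f(t)\rangle \leq 2\,\|V_t\|\,\|f(t)\|^2,
\end{equation*}
and Grönwall's inequality gives $\|f(t)\| \leq \|f(s)\|\exp\bigl(\int_s^t \|V_\tau\|\,d\tau\bigr) \leq \|f(s)\|\,e^{C}$ for all $t \geq s > a$. Call this uniform bound $M$. Next, integrating the differential equation in the Bochner sense yields
\begin{equation*}
f(t_2) - f(t_1) = \int_{t_1}^{t_2} V_\tau f(\tau)\,d\tau, \qquad \|f(t_2)-f(t_1)\| \leq M \int_{t_1}^{t_2}\|V_\tau\|\,d\tau,
\end{equation*}
so by absolute integrability of $\|V_\tau\|$ the net $\{f(t)\}_{t>a}$ is Cauchy as $t\to\infty$, and $f_\infty := \lim_{t\to\infty} f(t)$ exists in $\mathcal{H}$.

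Now assume $f_\infty = 0$. Letting $t_2\to\infty$ in the integral identity gives the representation
\begin{equation*}
f(s) = -\int_s^\infty V_\tau f(\tau)\,d\tau, \qquad s > a,
\end{equation*}
and hence $\|f(s)\| \leq \bigl(\sup_{\tau\geq s}\|f(\tau)\|\bigr)\int_s^\infty \|V_\tau\|\,d\tau$. Pick $s_0>a$ so large that $\eps(s_0) := \int_{s_0}^\infty \|V_\tau\|\,d\tau < 1$. Writing $g(s) := \sup_{\tau \geq s}\|f(\tau)\|$ and applying the bound for every $\tau \geq s_0$ gives $\|f(\tau)\| \leq \eps(s_0)\,g(s_0)$, and taking the supremum on the left yields $g(s_0) \leq \eps(s_0)\,g(s_0)$, forcing $g(s_0)=0$, i.e.\ $f \equiv 0$ on $[s_0,\infty)$.

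To conclude that $f$ vanishes on all of $(a,\infty)$, I would invoke uniqueness of solutions to the linear equation on any interval $[t_1,s_0]$ with $a < t_1 < s_0$: if $f,\tilde f$ both satisfy the ODE with $f(s_0)=\tilde f(s_0)=0$, then $\|f(t)-\tilde f(t)\| \leq \int_t^{s_0}\|V_\tau\|\,\|f(\tau)-\tilde f(\tau)\|\,d\tau$ on $[t_1,s_0]$, and Grönwall (applied in the backward direction) gives $f=\tilde f$ throughout; since $\tilde f \equiv 0$ is a solution, $f(t)=0$ for $t<s_0$ as well. The main subtlety I expect is keeping the analysis purely strong/Bochner-measurable: the hypothesis only guarantees strong continuity of $t\mapsto V_t g$ for each fixed $g$, not norm continuity of $t\mapsto V_t$, so one must justify that $t\mapsto V_t f(t)$ is Bochner-integrable (which follows from strong continuity of $f$ and pointwise continuity of $V_tf$ together with the dominating integrable majorant $\|V_t\|\cdot M$) before invoking the fundamental theorem used in the integrated form of the ODE.
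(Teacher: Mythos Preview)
Your proof is correct. The paper itself does not supply a proof of this lemma; it is quoted from Cohen and Powers \cite{CP} and used as a black box in the proof of Theorem~\ref{eigen}, so there is no in-paper argument to compare against. Your three-step approach (Gr\"onwall bound, Cauchy-at-infinity via the integrated equation, tail contraction plus backward uniqueness) is the standard route and matches what one finds in \cite{CP}. The measurability caveat you flag at the end is the right thing to worry about and your resolution is adequate.
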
 

 \begin{proof}[Proof of Theorem~\ref{eigen}]  
 We recall that by Proposition~~\ref{acK}, $\lambda = - \frac{Z \alpha_s}{r_-}$ is the only possible value which may be an eigenvalue. 
 Hence, it is enough to show that if $(\widetilde{K}_{k;\theta}^0+\frac{Z \alpha_s}{r_*} I ) g =0$ and $g \in L^2$, then $g$ is identically zero.

 Now note that 
\begin{align} \label{eigop}
{\widetilde{K}}_{k;\theta}^0 + \frac{Z \alpha_s}{r_-} I & =
 \left[ \begin{array}{cc} -b(x) + \frac{Z \alpha_s}{r_*} & - \frac{d}{dx}  \\  \frac{d}{dx}  
& -b(x) + \frac{Z \alpha_s}{r_*}    \end{array} \right] + 
 \left[ \begin{array}{cc} a(x)  &  kc(x)  \\  kc(x)
& -a(x)   \end{array} \right].
\end{align}
 Recall that $a(x), c(x) \rightarrow 0$  as $ x \rightarrow \infty $. 
  More specifically, $a(x),c(x)  \sim e^{-\varkappa x}$ in the subextremal case, 
whereas in the extremal case $a(x),c(x)\sim x^{-1} $, as $x \rightarrow \infty$. 
 
 Let $ \eta(x) = \int_A^{x}  (-b(y) + \frac{Z \alpha_s}{r_*}) dy $ for some $A >0$.
 Recall that $b(y)$ is continuous away from zero and therefore, $ \eta^{\prime}$ is defined. 
Next, note that ignoring 
the second matrix at r.h.s.\eqref{eigop}, the so truncated eigenvalue problem is locally solved by 
$g_\pm(x) = A_\pm e^{\pm i \eta(x)} \g_\pm$, where $\g_\pm = [1,\mp i]^T$; yet note that $g_\pm(x)$ is not in $L^2$.

 The full eigenvalue problem $(\widetilde{K}_{k;\theta}^0+\frac{Z \alpha_s}{r_*} I ) g =0$, and for $g \in L^2$, can now be addressed with the 
help of the method of variation of constants. 
 Thus we make the ansatz
\begin{align}\label{eigenf}
g(x) = \left[\begin{array}{c} u(x) e^{ i  \eta(x)} + v(x) e^{-i\eta(x)}\\
 -i u(x) e^{ i \eta(x)} + i v(x) e^{-i\eta(x)}
\end{array}\right]. 
\end{align}
 Inserting (\ref{eigenf}) into  $(\widetilde{K}_{k;\theta}^0 +\frac{Z \alpha_s}{r_*} ) g=0 $ we obtain 
\begin{equation}
\hspace{-5pt}
  \left[\!\!\begin{array}{c} i u^{\prime}(x) e^{ i  \eta(x)} - i v^{\prime} (x) e^{- i  \eta(x)}  \\
 u^{\prime}(x) e^{ i \eta(x)} + v^{\prime} (x) e^{-i\eta(x)}
\end{array}\!\!\right] 
\! =\! \left[\!\!\begin{array}{c} u(x) e^{ i  \eta(x)} [a(x) - i k c(x)] + v(x) e^{-i\eta(x)} [a(x) + i kc(x)]  \\
 u(x) e^{i\eta(x)} [ia(x) + kc(x)] +  v(x) e^{-i\eta(x)} [ - i a(x) + kc(x)] 
\end{array}\!\!\right] \!.
\hspace{-15pt}
\end{equation}
 Written in terms of $\frac{d}{dx} (u,v)^T$ this yields
\begin{align} 
 \frac{d}{dx} \left[\!\! \begin{array}{c} u(x) \\ v(x) \end{array}\!\! \right]
= 
 \left[ \begin{array}{cc} 0 & e^{-2 i \eta(x)}  [ia(x) + kc(x)]   \\  e^{2i \eta(x)} [-ia(x) + kc(x)] & 0   \end{array} \right] 
\left[\!\! \begin{array}{c} u(x) \\ v(x) \end{array}\!\! \right].
\end{align}

Recall that in the subextremal case  $a(x), c(x) \sim e^{-\varkappa x}$ and therefore, by Lemma~\ref{noeig}, we obtain $u = 0 = v$.
 This finishes the proof. 
\end{proof}

\begin{rmk} 
 Lemma~\ref{noeig} can be applied only to the subextremal case, where we have $r - r_- \sim e^{-\varkappa x}$ as $ x \rightarrow \infty$, 
and $e^{-\varkappa x}$ is integrable at $\infty$.
 On the other hand, in the extremal case, $r - r_0 \sim x^{-1}$ as $ x \rightarrow \infty$ and this does not satisfy the integrability condition 
in Lemma~\ref{noeig}.
\end{rmk}

 As an immediate consequence of our Proposition \ref{acK}, and Theorem \ref{eigen}, we have
\begin{cor}\label{cor:AC} 
 In the subextremal case the essential spectrum, given by the whole real line, is purely absolutely continuous:
$\sigma_{ac}(K_{k;\theta}^{\text{0}})= \Rset$.
\end{cor}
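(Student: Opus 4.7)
The plan is to obtain Corollary \ref{cor:AC} by synthesizing the preceding spectral information, namely Theorem \ref{essK}, Proposition \ref{acK}, Theorem \ref{eigen}, and Corollary \ref{cor:sc}, together with the standard fact that the absolutely continuous spectrum of any self-adjoint operator is a closed subset of $\Rset$.

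First I would invoke Proposition \ref{acK} to obtain the inclusion $\Rset\setminus\{-\alphaS Z/r_-\}\subset\sigma_{ac}(\widetilde{K}_{k;\theta}^{0})$ in the subextremal case, where $r_-$ is the inner horizon radius. Since $\sigma_{ac}(\widetilde{K}_{k;\theta}^{0})$ is defined as the topological support of the absolutely continuous part of the spectral measure and is therefore closed in $\Rset$, one may close both sides of this inclusion to obtain $\Rset\subset\sigma_{ac}(\widetilde{K}_{k;\theta}^{0})$; combined with the trivial reverse inclusion $\sigma_{ac}(\widetilde{K}_{k;\theta}^{0})\subset\sigma(\widetilde{K}_{k;\theta}^{0})\subset\Rset$, this yields the desired equality $\sigma_{ac}(\widetilde{K}_{k;\theta}^{0})=\Rset$.

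Second, to justify that the essential spectrum is \emph{purely} absolutely continuous (not merely that $\sigma_{ac}$ equals $\Rset$), I would assemble the remaining pieces: Theorem \ref{essK} supplies $\sigma_{ess}(\widetilde{K}_{k;\theta}^{0})=\Rset$; Corollary \ref{cor:sc} rules out any singular continuous contribution; and Theorem \ref{eigen} rules out any eigenvalues in the subextremal case, so in particular the threshold value $-\alphaS Z/r_-$, which was the only candidate left open by Proposition \ref{acK}, fails to be an embedded eigenvalue. Consequently the orthogonal decomposition $\sigma(\widetilde{K}_{k;\theta}^{0})=\sigma_{ac}\cup\sigma_{sc}\cup\sigma_{pp}$ collapses to $\sigma(\widetilde{K}_{k;\theta}^{0})=\sigma_{ac}(\widetilde{K}_{k;\theta}^{0})=\Rset$, which is precisely the claim. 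There is no genuine obstacle here: the analytical work has already been done in the preceding proofs, and this corollary is a purely formal repackaging of the resulting spectral decomposition.
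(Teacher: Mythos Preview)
Your proposal is correct and matches the paper's approach: the paper simply records Corollary~\ref{cor:AC} as an immediate consequence of Proposition~\ref{acK} and Theorem~\ref{eigen}, and your argument spells out exactly this synthesis (supplemented by Theorem~\ref{essK} and Corollary~\ref{cor:sc}, which the paper leaves implicit), including the closedness of $\sigma_{ac}$ needed to absorb the single point $-\alphaS Z/r_-$.
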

%

%%%%%%%%%%%
\subsubsection{Electron with anomalous magnetic moment}\label{sec:ANOM}
%%%%%%%%%%%
\hfill

 We now address the Dirac Hamiltonian for an electron with anomalous magnetic moment in the static interior of a RWN black-hole spacetime. 
\begin{thm} 
 In both the subextremal and the extremal case, the operator ${\widetilde{K}}_k^{\far}$ given by (\ref{Ktilde}) is essentially self-adjoint if %iff
 $\far\frac{e^3}{4\pi  \mEL c^2}  \geq \frac32 {\sqrt{G}\hbar}/{c}$.
\end{thm}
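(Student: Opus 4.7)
I would prove essential self-adjointness of $\widetilde{K}_k^{\far}$ by showing that the corresponding symmetric operator on the minimal domain $C^\infty_c(0,\infty)^2$ is in the Weyl limit-point case at both endpoints $x=0$ and $x=\infty$, following the strategy of \cite{BMB} and the companion paper \cite{KTT}. For a Dirac-type first-order system on an interval, essential self-adjointness on the minimal domain is equivalent to limit-point at each endpoint.

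\textbf{Endpoint at infinity.} As $x\to\infty$: in the subextremal case $a(x),c(x),d(x)$ decay exponentially like $e^{-\varkappa x}$ while $b(x)\to Z\alphaS/r_-^{}$; in the extremal case $r_0^{}-r(x)\sim r_0^2/x$, so $a(x),c(x),d(x)=O(1/x)$ and $b(x)\to Z\alphaS/r_0^{}$. In either case the coefficient matrix converges to a constant modulo a perturbation of bounded variation with integrable variation, and Theorem~\ref{th:weid} --- or an exponential-dichotomy argument adapted from it --- gives limit-point at $\infty$ unconditionally in $\far$.

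\textbf{Endpoint at zero.} From \eqref{xOFrSE}--\eqref{xOFrEX} one has $r(x)\sim (3r_+^{}r_-^{}\,x)^{1/3}$ (with $r_-^{}=r_+^{}=r_0^{}$ in the extremal limit), while $f^2(r)\sim r_+^{}r_-^{}/r^2$ as $r\to 0$. These yield $a(x),b(x)=O(x^{-1/3})$ and $kc(x)=O(x^{-2/3})$, and crucially
\begin{equation*}
\far\,d(x)=\frac{q}{x}+o\!\left(\tfrac{1}{x}\right),\qquad q:=\frac{\far\,e^{3}}{12\pi\,\mEL\sqrt{G}\,\hbar c},
\end{equation*}
as one computes from $f(r)\sim\sqrt{r_+^{}r_-^{}}/r$, $\sqrt{r_+^{}r_-^{}}=Ze\mEL\sqrt{G}/(\hbar c)$, and $r^3\sim 3r_+^{}r_-^{}\,x$. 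Thus among the singular entries of $\widetilde{K}_k^{\far}$ only the anomalous-magnetic-moment term $\far d(x)$ carries a genuine $1/x$ pole at the origin; all other singular coefficients are strictly sub-$1/x$.

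\textbf{Indicial analysis and the threshold.} Rewriting $\widetilde{K}_k^{\far} g=\lambda g$ as the first-order system
\begin{equation*}
g_1^{\prime}=-Q(x)g_1+(\lambda+a+b)g_2,\qquad g_2^{\prime}=Q(x)g_2-(\lambda-a+b)g_1,\quad Q:=kc-\far d,
\end{equation*}
and retaining only the $1/x$ terms decouples it to $g_1^{\prime}\approx (q/x)g_1$, $g_2^{\prime}\approx -(q/x)g_2$, whose two independent Euler-type solutions have leading behavior $(x^q,0)^T$ and $(0,x^{-q})^T$. The sub-leading singular terms of orders $x^{-1/3}$ and $x^{-2/3}$ are handled by a Volterra fixed-point iteration on $(0,\epsilon)$ in a norm weighted by $x^{\pm q}$, yielding exact solutions of the full system with the same leading asymptotics. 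A solution with leading exponent $\alpha$ lies in $L^2$ near $0$ iff $\mathrm{Re}(\alpha)>-1/2$, so both fundamental solutions are $L^2$ iff $|q|<1/2$ and at least one fails to be $L^2$ iff $|q|\geq 1/2$, which is the limit-point condition at $x=0$. Combined with limit-point at $\infty$ this gives essential self-adjointness, and a direct calculation verifies that $|q|\geq \tfrac12$ is equivalent to $\far\,\frac{e^3}{4\pi \mEL c^2}\geq \tfrac{3}{2}\tfrac{\sqrt{G}\hbar}{c}$.

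\textbf{Main obstacle.} The technical heart of the argument is the Volterra iteration, where one must uniformly control the corrections generated by the two distinct subleading singularities $x^{-1/3},x^{-2/3}$ when integrated against the base factors $x^{\pm q}$, and verify that the resulting integral operator is a contraction on a small interval $(0,\epsilon)$ in the weighted norm, uniformly in the spectral parameter. An alternative route that sidesteps the two-solution construction is to prove limit-point directly from a Hardy-type inequality: for $|q|\geq 1/2$ the quadratic form of $\widetilde{K}_k^{\far}$ on $C_c^\infty$ dominates $\int_0^\epsilon x^{-2}|g|^2\,dx$, so any element of the maximal domain automatically satisfies the natural boundary condition at $x=0$.
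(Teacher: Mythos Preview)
Your proposal is correct and follows essentially the same route as the paper: establish the limit-point case at each endpoint, with the anomalous-magnetic-moment term supplying the only genuine $1/x$ (equivalently $1/r$) singularity at the origin, and read off the threshold from the indicial exponents. The only cosmetic difference is that the paper performs the origin analysis directly in the $r$-variable with the weighted norm \eqref{fnormBH} (finding $g_1\sim r^{Z\alphaS^2\far/4\pi q}$, $g_2\sim r^{-Z\alphaS^2\far/4\pi q}$ with $q=\sqrt{r_+r_-}$) rather than in $x$, which slightly shortcuts the asymptotics but yields the identical condition.
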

\begin{proof} 
 We will show that the limit point case (LPC) is verified both, in the right neighborhood of $x = 0$, and in the left neighborhood of $x= \infty$
iff  $\far\frac{e^3}{4\pi  \mEL c^2}  \geq \frac32 {\sqrt{G}\hbar}/{c}$,  i.e. then 
there is at least one non-square integrable solution to ${\widetilde{K}}_k^{\far}  g = \lambda g$ 
for each $\lambda \in \mathbb{C}$, or equivalently for a fixed $\lambda$, see \cite[Theorem~5.6]{WeiC}.  

 We start with the left neighborhood of $x = \infty$. 
 As $x \rightarrow \infty$, the operator ${\widetilde{K}}_k^{\far}$ approaches
\begin{align}
K_* :=\left[ \begin{array}{cc} - \frac{Z \alpha_s}{r_*} & - \frac{d}{dx}  \\ 
\frac{d}{dx}& - \frac{Z \alpha_s}{r_*}
	  \end{array} \right],
\end{align}
where again $r_*=r_-$ in the subextremal case and $r_*=r_0$ in the extremal case.
 Clearly,
$g_\pm= ( e^{\pm i \frac{Z \alpha_s}{r_*} x} , \mp i e^{\pm i \frac{Z \alpha_s}{r_*} x})^T$ are solutions to $K_*  g = 0$, and 
$g_\pm$ is not square integrable at $\infty$. 
 Hence, the LPC is satisfied  in the left neighborhood of $x= \infty$.

 Next, we address the right neighborhood of $x = 0$ ($r=0$), and consider the solutions to 
\begin{align} \label{0eigen}
&\Big[ f(r) -  \frac{Z \alpha_S}{r} \Big] g_1 + 
\Big[ \frac{k f(r)}{r} -   Z \alpha_S^2 \far \frac{f(r)}{4 \pi r^2}  - f^2(r) \frac{d}{dr} \Big] g_2=0 , \\
& \Big[ - f(r) -  \frac{Z \alpha_S}{r} \Big] g_2 + 
\Big[ \frac{k f(r)}{r} -   Z \alpha_S^2 \far \frac{f(r)}{4 \pi r^2}  + f^2(r) \frac{d}{dr} \Big] g_1=0.
\label{00eigen}
\end{align} 
 Recall that $g = (g_1, g_2)^T$ is square integrable in the right neighborhood of $r = 0$ with the inner product associated with
\eqref{fnormBH} iff for each $ 0 <R < r_-^{}$,
\begin{align} \label{sqint}
\int_{0}^{R} \frac{1}{f^2(r)}  \Big( |g_1(r)|^2 + |g_2(r)|^2 \Big) dr < \infty .
\end{align}
 Therefore, we aim to find solutions to \eqref{0eigen}, \eqref{00eigen} such that \eqref{sqint} does not hold.
 Note that as $r \rightarrow 0$, $f(r) \sim q r^{- 1}$, where $q= (r_- r_+)^{\frac{1}{2}}$ in the subextremal 
case and $q = r_0 $ in the extremal case, when $r_-=r_+ \;(=r_0)$.
 
 Hence, around zero equations \eqref{0eigen}, \eqref{00eigen} become
\begin{align}
& g_2^{\prime} +  \frac{Z \alpha_S^2\far}{4 \pi q r} g_2 - \frac{k}{q} g_2 = O(r)g_1,   \\
& g_1^{\prime} -  \frac{Z \alpha_S^2\far}{4 \pi q r } g_1+ \frac{k}{q}g_1 = O(r)g_2.
\end{align}
 The above equations imply that in a right neighborhood of $r = 0$ we % have
can find solutions 
$g_1 \sim r^\frac{Z \alpha_S^2\far}{4 \pi q }$ and $ g_2 \sim r^{ - \frac{Z \alpha_S^2\far}{4 \pi q }}$.
  Note that  \eqref{sqint} implies that local square integrability holds for $g_1$ and $g_2$ iff
\be 
\int_0^{R} r^{ \pm \frac{Z \alpha_S^2\far}{2 \pi q }+ 2 } dr < \infty .
\ee
 Recalling the definition of $r_{\pm}$ from \eqref{rpm}, and $r_0$ from \eqref{r0} we see that $q = \frac{ G^{1/2} m_e Z e}{\hbar c}$ 
in both the subextremal and extremal case, so that $Z$ cancels out in the power of $r$.
 Therefore,  in both the subextremal and extremal case the LPC is satisfied iff 
$-\frac{1}{2\pi}\frac{\alpha_S^2 \far \hbar c }{ G^{1/2} m_e  e} +2 \leq -1$, which translates into
$\far\frac{e^3}{4\pi  \mEL c^2}  \geq \frac32 {\sqrt{G}\hbar}/{c}$. 
\end{proof}

 Inserting numerical values for the physical and mathematical constants, we conclude that 
$\far\geq 1.3\cdot 10^{-18}$ implies essential self-adjointness.
 Therefore we arrive at 
\begin{cor}
 ${\widetilde{K}}_k^{\far}$ is essentially self-adjoint if the empirical value of the electron's anomalous magnetic 
moment is used, in which case ${\far}=1$ to three significant digits.
\end{cor}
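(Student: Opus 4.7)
The plan is to apply the preceding theorem, which identifies the precise threshold for essential self-adjointness as $\far\frac{e^3}{4\pi \mEL c^2}\geq \frac{3}{2}\frac{\sqrt{G}\hbar}{c}$. First I would isolate $\far$, obtaining the explicit bound
\begin{equation}
\far \;\geq\; 6\pi\,\frac{\sqrt{G}\,\hbar\,\mEL\, c}{e^3} \;=\; \frac{3}{2}\,\frac{\sqrt{G}\,\hbar/c}{\mu_{\text{class}}},
\end{equation}
where in the second equality I have used the definition $\mu_{\text{class}} = \frac{1}{4\pi}\frac{e^3}{\mEL c^2}$ introduced in the paper.

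Next I would invoke the numerical estimate already recorded in the introduction, namely $\sqrt{G}\,\hbar/c \approx 1.3\cdot 10^{-18}\,\mu_{\text{class}}$, which reduces the threshold to the clean form $\far \geq \tfrac{3}{2}\cdot 1.3\cdot 10^{-18} \approx 2\cdot 10^{-18}$. I would then unpack the definition of the parameter $\far$: the paper writes the electron's anomalous moment as $\mu_a = -\far\,\mu_{\text{class}}$, so $\far$ is just the empirical ratio $|\mu_a|/\mu_{\text{class}}$. Since $|\mu_a|\approx \mu_{\text{class}}$ to three significant digits, the empirical value is $\far=1$ to the stated precision, which trivially beats the threshold by eighteen orders of magnitude.

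There is essentially no obstacle here; the corollary is a numerical specialization of the theorem. The only bookkeeping step is to confirm that the numerical identity $\sqrt{G}\hbar/c\approx 1.3\cdot 10^{-18}\mu_{\text{class}}$ quoted in the introduction is indeed the same combination of constants that appears on the right-hand side of the essential self-adjointness condition, which I verified above by rewriting $6\pi\sqrt{G}\hbar\mEL c/e^3$ as $(3/2)(\sqrt{G}\hbar/c)/\mu_{\text{class}}$.
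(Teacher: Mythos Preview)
Your proposal is correct and follows essentially the same route as the paper: invoke the preceding theorem's threshold $\far\,\mu_{\text{class}}\geq \tfrac{3}{2}\sqrt{G}\hbar/c$, plug in the numerical relation $\sqrt{G}\hbar/c\approx 1.3\cdot 10^{-18}\mu_{\text{class}}$ from the introduction, and observe that the empirical value $\far=1$ clears this by eighteen orders of magnitude. The paper records the threshold as $\far\geq 1.3\cdot 10^{-18}$ (apparently absorbing or dropping the factor $3/2$ that you correctly retain to get $\approx 2\cdot 10^{-18}$), but this immaterial discrepancy does not affect the argument or the conclusion.
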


 In the rest of this section we characterize spec ${\widetilde{K}}_k^{\far}$ when 
$\far\frac{e^3}{4\pi  \mEL c^2} > \frac32 {\sqrt{G}\hbar}/{c}$. 

\begin{thm}\label{thm:critANOMmagnMOM}
The essential spectrum $\sigma_{ess} ({\widetilde{K}}_k^{\far}) = \mathbb{R}$ for both the subextremal and the extremal case.
\end{thm}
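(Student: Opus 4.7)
The plan is to follow the Weyl-sequence strategy used in the proof of Theorem~\ref{essK}, reusing the same family of test functions and verifying that the extra anomalous-magnetic-moment contribution does not spoil the argument. Since essential self-adjointness (established just above the statement) pins down a unique self-adjoint extension of $\widetilde{K}_k^\far$, I can apply the Weyl criterion directly: it suffices to exhibit, for each $\lambda \in \Rset$, a normalized weakly-null sequence on which $\widetilde{K}_k^\far - \lambda$ tends to zero in norm. I would split $\widetilde{K}_k^\far = K_* + W^\far$, where
\[ K_* = \left[\begin{array}{cc} -Z\alphas/r_* & -d/dx \\ d/dx & -Z\alphas/r_* \end{array}\right], \qquad r_* = \begin{cases} r_- & \text{subextremal}\\ r_0 & \text{extremal}\end{cases} \]
is the asymptotic operator at $x=\infty$, and $W^\far$ collects the remaining entries. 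Every entry of $W^\far$ vanishes as $x \to \infty$ since $f(r(x)) \to 0$ and $r(x) \to r_*$; near $x = 0$, the pieces inherited from the $\far = 0$ analysis contribute $a(x), b(x) \sim x^{-1/3}$ and $c(x) \sim x^{-2/3}$, while the new off-diagonal term satisfies $\far d(x) \sim x^{-1}$ (using $f(r)/r^2 \sim r^{-3}$ and $r \sim x^{1/3}$, valid in both the subextremal and extremal case).

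Setting $w = -Z\alphas/r_* - \lambda$, I would take the same singular sequence as in \eqref{weyl},
\[ f_{n,\lambda}(x) = \frac{1}{2n^{3/2}}\, x\, e^{-x/(2n)+ixw}\bigl[1,\,-i\bigr]^T, \qquad n \in \Nset, \]
for which $\|f_{n,\lambda}\| = 1$ and $f_{n,\lambda} \rightharpoonup 0$. A direct computation — essentially the one implicit in the proof of Theorem~\ref{essK} — gives $(K_* - \lambda) f_{n,\lambda} = \bigl[i\phi'(x)e^{ixw},\, \phi'(x)e^{ixw}\bigr]^T$ with $\phi(x) = xe^{-x/(2n)}/(2n^{3/2})$, so $\|(K_*-\lambda)f_{n,\lambda}\| = O(1/n)$. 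To estimate $\|W^\far f_{n,\lambda}\|$, given $\eps > 0$ I would choose $N$ large enough that every entry of $W^\far$ is bounded by $\eps$ on $(N, \infty)$, yielding a tail contribution at most $\eps$; on $[0,N]$, the pointwise identity $|f_{n,\lambda}(x)|^2 = x^2 e^{-x/n}/(2n^3)$ combined with $\|W^\far(x)\|^2 \les x^{-2}$ near $x = 0$ makes $x^2 \|W^\far(x)\|^2$ locally integrable, so
\[ \int_0^N |W^\far f_{n,\lambda}|^2\, dx \leq \frac{1}{2n^3}\int_0^N x^2 \|W^\far(x)\|^2\, dx \xrightarrow{n\to\infty} 0. \]
Sending $n \to \infty$ and then $\eps \to 0$ gives $\|(\widetilde{K}_k^\far - \lambda)f_{n,\lambda}\| \to 0$, hence $\lambda \in \sigma_{ess}(\widetilde{K}_k^\far)$; the reverse inclusion is automatic since the operator is self-adjoint.

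The main obstacle is the new $\far d(x) \sim x^{-1}$ singularity at the origin, which is strictly worse than anything appearing in the $\far = 0$ analysis. The reason the same Weyl sequence still works is that $f_{n,\lambda}$ was chosen to vanish linearly at $x = 0$, providing exactly the $x^2$ weight needed to keep $x^2 \|W^\far\|^2$ locally integrable, while the $1/n^{3/2}$ prefactor simultaneously concentrates mass at $x = \infty$ as $n \to \infty$ and drives the local contribution to zero. A secondary check is that $f_{n,\lambda}$ lies in the domain of the unique self-adjoint extension: the same singularity bookkeeping shows $\widetilde{K}_k^\far f_{n,\lambda} \in L^2$ since the worst product $\far d(x)\cdot (f_{n,\lambda})_2 \sim x^{-1}\cdot x = O(1)$ is locally bounded, and a $C_c^\infty$ mollification would absorb any remaining domain issue with negligible cost to the Weyl estimates.
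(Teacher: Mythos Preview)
Your argument is correct but follows a genuinely different route from the paper's. The paper does \emph{not} attempt to run a Weyl sequence through the origin; instead it invokes Weidmann's decoupling theorem to write $\sigma_{ess}(\widetilde{K}_k^\far) = \sigma_{ess}(\widetilde{K}_k^\far([0,b])) \cup \sigma_{ess}(\widetilde{K}_k^\far([b,\infty)))$, uses a Weyl sequence supported entirely on $[b,\infty)$ (so the $x^{-1}$ singularity of $d(x)$ never enters), and then disposes of the $[0,b]$ piece separately by showing its spectrum is purely discrete via a criterion of Hinton--Mingarelli--Shaw (which hinges precisely on $\int_0^b|kc-\far d|\,dx=\infty$, i.e.\ on the non-integrability of $d$ near $0$). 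Your approach, by contrast, keeps the original Weyl sequence \eqref{weyl} on the full half-line and observes that its linear vanishing at $x=0$ exactly neutralizes the $d(x)\sim x^{-1}$ singularity, so that $x^2\|W^\far(x)\|^2$ stays locally integrable. This is more direct and self-contained --- it needs neither the decoupling theorem nor the external discrete-spectrum result --- while the paper's method is more modular and would survive a stronger singularity at the origin (anything with LPC and a divergent potential integral), where your compensation trick would fail. Your handling of the domain question is also fine: since LPC holds at both endpoints, the unique self-adjoint extension has the maximal domain $\{g\in L^2:\tau g\in L^2\}$, and you verified $\widetilde{K}_k^\far f_{n,\lambda}\in L^2$.
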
 

\begin{proof}
We define the operators  ${\widetilde{K}}_k^{\far}({[0,b]})$ and
 $ {\widetilde{K}}_k^{\far}({[b, \infty)})$ as the restriction of  ${\widetilde{K}}_k^{\far} $ to $L^2([0,b])$ and $ L^2([b,\infty])$ respectively. 
Then by Theorem~11.5 in \cite{WeiC}, we have 
\be 
\sigma_{ess} ({\widetilde{K}}_k^{\far}) = \sigma_{ess} ({\widetilde{K}}_k^{\far}({[0,b]})) \cup  \sigma_{ess} ( {\widetilde{K}}_k^{\far}({[b, \infty)})).
\ee
 Instead of \eqref{weyl} we now use the following Weyl sequence,
\be 
f_{n, \lambda} (x) = \frac{1}{\sqrt{2 n }} e^{- \frac{(x-b)}{2n} + i w} 
\left[{\scriptsize{\begin{array}{c} 1 \\ -i 
	\end{array}}}\right];\quad n\in\Nset.
\ee
 Then $\xi(x) = - \int_b^{x} [kc(y) - \far d(y)] dy$ for $ b \leq x \leq b+1$ in \eqref{S}, and one can show that 
 $\sigma_{ess} ( {\widetilde{K}}^{[b, \infty)}_{\mu_a,k})=\mathbb{R}$ in a similar way as in the proof of Theorem~\ref{essK}. 

 On the other hand, the operator $ {\widetilde{K}}_k^{\far}({0, b]})$ can only have discrete spectrum. 
 To see that, we use Theorem~2 in \cite{HS}. 
In particular, since the limit point case holds, $ {\widetilde{K}}_k^{\far}({[0,b]})$ has discrete spectrum if also 
\be
 \int_0^{b} |kc(x) - \far d(x)| dx = \infty. 
\ee
 Notice that $d(x) \sim x^{-1}$ as $x\to 0$, which is not locally integrable around zero.
 See \eqref{Ktildeab} for the definitions of $c(x)$ and $d(x)$. 
\end{proof}

\begin{prop} \label{acK1} 
 In both the subextremal and the extremal case, 
$\mathbb{R}\setminus\{ -\frac{\alpha_s Z} {r_-}\} \subset \ \sigma_{ac}({\widetilde{K}}_{k}^\far)$ and $\sigma_{sc}({\widetilde{K}}_{k}^\far)=\emptyset$.
\end{prop}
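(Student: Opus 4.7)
The plan is to apply Weidmann's Theorem~\ref{th:weid} in the same manner as in the proof of Proposition~\ref{acK}, this time with general $\far>0$ above the essential self-adjointness threshold. Write $\widetilde{K}_k^{\far}$ in the form of $\tau$ with $P_1(x)\equiv 0$ and $P_2(x)=P_2^{\far}(x)$ as defined in \eqref{def:p2}; it then suffices to verify (a) bounded variation of $P_2^{\far}$ on some $[c,\infty)$, and (b) convergence of $P_2^{\far}(x)$ to a scalar matrix as $x\to\infty$.

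For (a), the entries of $P_2^{\far}$ are continuously differentiable combinations of $f(r(x))$, $1/r(x)$, $f(r(x))/r(x)$, and $f(r(x))/r^2(x)$. In the subextremal case the asymptotics from the proof of Theorem~\ref{th:self} give $f(r(x))\sim e^{-\varkappa x}$ together with $r(x)\nearrow r_-$; each entry thus either decays exponentially or is monotone with a finite limit, so its derivative is integrable on $[c,\infty)$. In the extremal case, the change of variables \eqref{xOFrEX} gives $r_0-r(x)\sim r_0^2/x$ and hence $f(r(x))\sim 1/(r_0 x)$; using $dr/dx=f^2(r)$ one then reads off $|dP_2^{\far}/dx|=O(1/x^2)$ at infinity, which is again integrable. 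Hence $P_2^{\far}$ is of bounded variation on $[c,\infty)$ in both cases. For (b), the limit computation \eqref{lim0} was already stated for every $\far\geq 0$, delivering $\fm_-=\fm_+=-\alpha_s Z/r_*$. Theorem~\ref{th:weid} therefore gives $\mathbb{R}\setminus\{-\alpha_s Z/r_*\}\subset \sigma_{ac}(\widetilde{K}_k^{\far})$.

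The absence of singular continuous spectrum then follows by the argument of Corollary~\ref{cor:sc}: by Theorem~\ref{thm:critANOMmagnMOM}, $\sigma_{ess}(\widetilde{K}_k^{\far})=\mathbb{R}$, while the interior of the essential spectrum has just been shown to be purely absolutely continuous. Thus $\sigma_{sc}(\widetilde{K}_k^{\far})$ could only consist of the single exceptional point $-\alpha_s Z/r_*$, and a singular continuous measure cannot be supported at a single point. The only genuinely $\far$-dependent contribution to $P_2^{\far}$ is the off-diagonal term $Z\alphaS^2(\far/4\pi) f(r(x))/r^2(x)$, which is the main candidate for obstructing bounded variation; but since $r(x)$ stays bounded away from zero on $[c,\infty)$ and $f(r(x))\to 0$, it is as well-behaved as the $\far=0$ off-diagonals, so no new analytic difficulty arises compared with Proposition~\ref{acK}.
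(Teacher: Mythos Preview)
Your proposal is correct and follows essentially the same route as the paper's own proof: apply Weidmann's Theorem~\ref{th:weid} with $P_1\equiv 0$ and $P_2=P_2^{\far}$, invoke the limit \eqref{lim0} (already established for all $\far\geq 0$), and then repeat the singular-continuous argument of Corollary~\ref{cor:sc} using Theorem~\ref{thm:critANOMmagnMOM} in place of Theorem~\ref{essK}. The only difference is one of detail: the paper disposes of the bounded-variation hypothesis in a single sentence by pointing to the continuous differentiability of $f(r(x))$ and $r(x)$, whereas you spell out the asymptotics at infinity (exponential decay in the subextremal case, $O(1/x^2)$ derivative in the extremal case) to confirm that the total variation on $[c,\infty)$ is finite --- a welcome elaboration, since $C^1$ alone on an unbounded interval does not guarantee bounded variation.
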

\begin{proof}
We first note that the fact that $\sigma_{sc}({\widetilde{K}}_{k}^\far)=\emptyset$ follows from the claim on the absolutely continuous spectrum,
 see Corollary~\ref{cor:sc}.
 Therefore, we only prove that $\mathbb{R}\setminus\{ -\frac{\alpha_s Z} {r_-}\} \subset \ \sigma_{ac}({\widetilde{K}}_{k}^\far)$.
 Similarly to the proof of Proposition~\ref{acK}, we use Theorem~\ref{th:weid}.
 In particular, we now need to consider the operator $P_2^\far(x)$ defined in \eqref{def:p2}, with $\far >0$. 
 We already proved the limit property \eqref{lim0} for ${P}_2^\far$.
 And so, our proof of Proposition \ref{acK} in concert with Corollary \ref{cor:sc} also proves Proposition \ref{acK1}.
\end{proof} 

\begin{thm}
In the subextremal case ${\widetilde{K}}_k^{\far}$ has no eigenvalues. 
\end{thm}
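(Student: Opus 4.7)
The plan is to mirror the proof of Theorem~\ref{eigen}, which handled the case $\far=0$. By Proposition~\ref{acK1} the absolutely continuous spectrum covers $\Rset\setminus\{-Z\alphaS/r_-\}$ and the singular continuous spectrum is empty, so the only candidate for an embedded eigenvalue is $\lambda_\ast=-Z\alphaS/r_-$. It therefore suffices to prove that any $g\in L^2((0,\infty),\Cset^2)$ solving $(\widetilde{K}_k^{\far}+\tfrac{Z\alphaS}{r_-}I)g=0$ vanishes identically.

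First I would record the asymptotics of the coefficients. Reading off from \eqref{Ktildeab}, one has $a(x)=f(r(x))$, $c(x)=f(r(x))/r(x)$ and $d(x)=Z\alphaS^2 f(r(x))/(4\pi r^2(x))$. In the subextremal sector $r(x)\to r_-$ with $r_--r(x)\sim e^{-\varkappa x}$, hence $f(r(x))$ decays exponentially, and so do $a(x)$, $c(x)$ and $d(x)$; in particular $|a(x)|+|kc(x)-\far d(x)|$ is integrable on $[A,\infty)$ for any $A>0$. This is the key point that fails in the extremal case, where $f(r(x))\sim 1/x$.

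Next I rewrite the eigenvalue equation in the spirit of \eqref{eigop}, grouping the diagonal $-b(x)+Z\alphaS/r_-$ with the derivative and placing everything else into a perturbation. With $\eta(x)=\int_A^x(-b(y)+Z\alphaS/r_-)\,dy$, the variation-of-constants ansatz \eqref{eigenf} reduces $(\widetilde{K}_k^{\far}+\tfrac{Z\alphaS}{r_-}I)g=0$ to a first-order linear system in $(u,v)$ that is identical to the one in the proof of Theorem~\ref{eigen} except that $kc(y)$ is everywhere replaced by $kc(y)-\far d(y)$, namely
\begin{equation*}
\frac{d}{dx}\begin{bmatrix}u(x)\\ v(x)\end{bmatrix}=\begin{bmatrix} 0 & e^{-2i\eta(x)}[ia(x)+kc(x)-\far d(x)]\\ e^{2i\eta(x)}[-ia(x)+kc(x)-\far d(x)] & 0\end{bmatrix}\begin{bmatrix}u(x)\\ v(x)\end{bmatrix}.
\end{equation*}
The operator norm of the coefficient matrix is dominated by $|a(x)|+|kc(x)-\far d(x)|$, which lies in $L^1([A,\infty))$ by the previous paragraph. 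Lemma~\ref{noeig} therefore applies and guarantees that $\lim_{x\to\infty}(u(x),v(x))$ exists. The ansatz gives $|g_1|^2+|g_2|^2=2(|u|^2+|v|^2)$, so $g\in L^2$ forces $(u,v)\in L^2([A,\infty))$ and hence the limit is zero. The second conclusion of Lemma~\ref{noeig} then yields $u\equiv v\equiv 0$ on $[A,\infty)$, and propagating backward by uniqueness for the linear Cauchy problem on the punctured half-line gives $g\equiv 0$ on $(0,\infty)$.

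The main obstacle is really only the verification that the anomalous-moment contribution $\far d(x)$ does not destroy the integrability needed for Lemma~\ref{noeig}. Because $d(x)$ decays at exactly the same exponential rate as $f(r(x))$, it is actually harmless in the subextremal regime, and the argument goes through essentially unchanged from the $\far=0$ case; the remark following Theorem~\ref{eigen} explains why the same strategy cannot be pushed into the extremal sector.
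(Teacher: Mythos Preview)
Your proposal is correct and follows essentially the same approach as the paper's own proof: the paper simply remarks that the argument of Theorem~\ref{eigen} goes through verbatim once $kc(x)$ is replaced by $kc(x)-\far d(x)$, noting that $d(x)\sim c(x)$ at infinity so that the integrability hypothesis of Lemma~\ref{noeig} is still satisfied in the subextremal case. You have just spelled out the details the paper leaves to the reader.
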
 

\begin{proof}
 The proof  follows similarly to the proof of Theorem~\ref{eigen}. 
 One needs to consider the operator in \eqref{eigop} with $kc(x)$ replaced by  $kc(x) - \far d(x)$.
 Note that $ d(x) \sim c(x)$ at $\infty$, and hence the integrability condition in Lemma~\ref{noeig} is satisfied. 
 This concludes that $g$ has to be identically zero, if $g \in L^2$ and $( {\widetilde{K}}_k^{\far}+ \frac{Z \alpha_s}{r_-}  I) g =0$. 
\end{proof} 

\begin{cor}\label{cor:ACanom} 
 In the subextremal case the continuous spectrum is purely absolutely continuous and given by the whole real line,
$\sigma_{ac}(K_{k}^{\far})= \Rset$.
\end{cor}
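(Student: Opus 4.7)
The corollary is the immediate packaging of the three preceding results, so the proposal is simply to assemble them in the right order and handle the single remaining point $-\alphaS Z/r_-$. I will treat $\widetilde{K}_k^\far$ (the unitary image of $K_k^\far$ under the change of variables \eqref{eq:cv}) since essential self-adjointness has already been established in that coordinate, and the spectral types are preserved by unitary equivalence.

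The plan is to use the orthogonal spectral decomposition $\sigma(\widetilde{K}_k^\far) = \sigma_{ac} \cup \sigma_{sc} \cup \sigma_{pp}$. First, by the preceding theorem (no eigenvalues of $\widetilde{K}_k^\far$ in the subextremal case), $\sigma_{pp}(\widetilde{K}_k^\far) = \emptyset$. Second, Proposition \ref{acK1} gives $\sigma_{sc}(\widetilde{K}_k^\far) = \emptyset$. Consequently the full spectrum coincides with $\sigma_{ac}(\widetilde{K}_k^\far)$. Since $\sigma_{ess}(\widetilde{K}_k^\far) = \Rset$ by Theorem \ref{thm:critANOMmagnMOM} and the essential spectrum is always contained in the spectrum, we get $\Rset \subseteq \sigma(\widetilde{K}_k^\far) = \sigma_{ac}(\widetilde{K}_k^\far) \subseteq \Rset$.

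The only point that does not follow directly from Proposition \ref{acK1} is the possibly exceptional value $\lambda_* = -\alphaS Z/r_-$, which the Weidmann-type theorem does not guarantee to lie in $\sigma_{ac}$. Two clean ways to include it: either (a) invoke that $\sigma_{ac}$ is closed (as a subset of $\Rset$) so that $\Rset \setminus \{\lambda_*\} \subset \sigma_{ac}$ forces $\lambda_* \in \overline{\sigma_{ac}} = \sigma_{ac}$; or (b) observe that $\lambda_*$ must belong to $\sigma(\widetilde{K}_k^\far) \supseteq \sigma_{ess} = \Rset$ and, because both $\sigma_{sc}$ and $\sigma_{pp}$ are empty, the only spectral part it can belong to is $\sigma_{ac}$.

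I expect no serious obstacle: every ingredient has already been proved. The only point requiring a sentence of care is handling $\lambda_*$, and this is a standard closedness/decomposition remark. The corollary then transfers back to $K_k^\far$ through the unitary change of variables \eqref{eq:cv}.
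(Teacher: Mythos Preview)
Your proposal is correct and follows exactly the approach the paper intends: the corollary is stated without proof as an immediate consequence of Theorem~\ref{thm:critANOMmagnMOM}, Proposition~\ref{acK1}, and the preceding no-eigenvalue theorem, in complete analogy with Corollary~\ref{cor:AC}. Your extra care in handling the exceptional point $-\alphaS Z/r_-$ via closedness of $\sigma_{ac}$ (or the spectral decomposition) is a reasonable elaboration of what the paper leaves implicit.
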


%%%%%%%%%%%%%%%%%%%%%%%%%%%%%%%%%%%%
%%%%%%%%%%%%%%%%%%%%%%%%%%%%%%%%%%%%
%%%%%%%%%%%%%%%%%%%%%%%%%%%%%%%%%%%%
\section{Summary and outlook}\label{sec:summ}
%%%%%%%%%%%%%%%%%%%%%%%%%%%%%%%%%%%%
%%%%%%%%%%%%%%%%%%%%%%%%%%%%%%%%%%%%
%%%%%%%%%%%%%%%%%%%%%%%%%%%%%%%%%%%%

 In this paper we have investigated the Dirac Hamiltonian of a test electron
with or without anomalous magnetic moment when the electron is assumed to reside in the static subregion of the interior of a
Reissner--Weyl--Nordstr\"om (RWN) black-hole spacetime. 
 Using the partial wave decomposition the Dirac Hamiltonian becomes a direct sum of so-called radial Dirac operators
${\widetilde{K}}_k^{\far}$,
 where $\far=0$ amounts to an electron without, and $\far\approx 1$ to an electron with empirical anomalous magnetic moment,
and so it suffices to study the radial Dirac operators. 
 We found that ${\widetilde{K}}_k^{\far}$ is essentially self-adjoint if % and only if 
$\far \geq \far_{\text{\tiny{crit}}}\approx 1.3\cdot 10^{-18}$,
and has infinitely many self-adjoint extensions when $\far=0$.
 (We expect infinitely many self-adjoint extensions for all $0\leq \far< \far_{\text{\tiny{crit}}}$, but we haven't verified this.)
 So when working with the empirical value of the electron's anomalous magnetic moment, i.e. $\far=1$ to several significant digits, 
this Dirac operator is well-defined and generates a unitary dynamics for the electron on the static subregion inside the RWN black hole. 

 We have characterized the spectrum of any self-adjoint extension in all subextremal cases where we showed they exist,
and  we found the essential spectrum is the whole real line, consisting of purely absolutely continuous spectrum.
 Since there is no gap in the continuum, no discrete Dirac spectrum exists in the subextremal black hole sector of RWN,
unlike the situation in the naked singularity sector. 
 Worse, the absolute continuity result for the spectrum means the complete absence of point spectrum for an electron in the 
static interior region of a subextremal RWN black-hole spacetime. 
 An analogous result was proved in \cite{CP} for an electron outside the event horizon of a subextremal RWN black hole.

 Since our study was inspired by the works \cite{CP}, \cite{Belgiorno}, and \cite{BMB} on the
general-relativistic spectrum of hydrogen and hydrogenic ions, we have chosen a narrative which portrays the charged
singularity of a RWN black-hole spacetime as a hyper-heavy point nucleus.
 Indeed, since hypothetical hyper-heavy nuclei by definition obey $GM\mEL >  Ze^2$, they also obey $GM^2 >  Z^2e^2$ 
(because $Z\mEL < Z\mPR \leq M =A(Z,N)\mPR$ with $A(Z,N)\approx Z+N$), 
and thus are associated with the black hole sector of the RWN  spacetime family.
  As we have shown in section \ref{sec:RWN}, this means that the number of neutrons $N\gg Z$, as for neutron stars. 
 Since neutron stars have been likened to gigantic nuclei bound by their overall gravity, and since
a non-rotating charged neutron star that accumulates more than a critical mass is expected to collapse and leave a RWN black-hole behind,
it is suggestive to think of the charged singularity of a RWN black hole as a hugely massive nucleus, 
albeit one that actually has degenerated into a point (distinct from the point particle approximation to an empirical
atomic nucleus).
 In terms of this narrative, combining the results of Cohen and Powers with ours we can conclude 
that there is no hyper-heavy hydrogenic ion point spectrum at all in the static regions of the subextremal RWN black-hole sector.

 It still remains to discuss the Cohen--Powers setup for the extremal sector, i.e. electron outside of the horizon --- but in this
paper we were only concerned with electrons in the static part of the interior region. 
 It also remains to settle the issue of the point spectrum in the extremal black hole case, when the electron spinor wave function is
supported inside the event horizon. 
 We have shown that the only possible eigenvalue is $-Z\alphaS/r_0$, where $r_0$ is the area radius at which the horizon is located, 
but it is not clear whether this value is an eigenvalue, and if so, whether it is simple, finitely degenerate, or even infinitely 
degenerate. 
 The extremal RWN black hole sector is not generic in the RWN spacetime family, but the open questions are technically challenging, and it
is curious to contemplate that this exceptional black hole setting is so far the only one which has not been ruled out of
permitting bound states.
 
 It also remains to be seen whether the presence of a horizon \emph{generically} causes absence of eigenvalues for the Dirac operator,
as conjectured in \cite{CP} for electrons with wave functions supported outside the event horizon, and which may now be conjectured to be
true also for electrons in the static interior of other subextremal black hole spacetimes. 
 To prove such a conjecture in all generality, if indeed true, is a challenging project. 
 Yet there are several feasible generalizations of our study which are worthy of pursuit,
and which could cement the conjecture further or, possibly, disprove it.

 One further direction of inquiry could be an investigation of the Dirac operator for a test electron in generalizations of the RWN 
black hole spacetime of a single point nucleus that obey other electrostatic vacuum laws. 
 The naked singularity sector of such spacetimes has been described in \cite{TZ}, and generalized in appendix B of \cite{KTT}.
 Such a study has the technical advantage that the spherical symmetry of the spacetime allows one to work with the partial wave decomposition
of the Dirac Hamiltonian, as done in the present paper. 

 For the naked singularity sector such a study has recently been carried out in \cite{Moulik} for singularities with zero bare mass, and
in \cite{KTT} for spacetimes with naked singularities of strictly negative bare mass, with some surprising results.
 The perhaps most surprising result of \cite{KTT} (to its authors at least) is that the Dirac operator for an electron in the naked 
singularity sector of the Hoffmann spacetime \cite{Hoffmann, TZ} (Born \cite{BornA} or Born--Infeld vacuum law) of a point nucleus 
is not essentially self-adjoint, with or without anomalous magnetic moment, unless the bare mass of the singularity vanishes \cite{Moulik}. 
 A vanishing bare mass is not typical, though, and so the upshot is that in the naked singularity sector of
the Hoffmann spacetime family the Dirac Hamiltonian of a test electron is typically not well-defined even if the anomalous magnetic
moment is taken into account.
 
 We suspect that the same conclusion will hold for the Dirac operator of a test electron in the static part of the 
interior region of a Hoffmann black hole spacetime.
 
 Another generalization of the present work is to study the Dirac equation for a test electron in the multi-black hole spacetime
family of Hartle and Hawking \cite{HH}, obtained by analytical completion of the asymptotically flat, static, Majumdar--Papapetrou metrics. 
 These are very special spacetimes,  but there are not many explicit representations of spacetimes with several black holes
in them. 
 Each black hole of the Hartle--Hawking family obeys the RWN extremal condition $GM^2 = Q^2$.  
 The simplest multi-black hole case is a two-black-holes spacetime, inevitably having axial symmetry. 
 Separation of variables should again be feasible, even though perhaps not as explicitly solvable as in the spherically symmetric case. 
 A discrete reflection symmetry is available in a three-black-holes spacetimes, offering some simplification, 
yet for three or more nuclei (black holes) functional and PDE analysis will have to be fielded to study 
the Dirac Hamiltonian, cf. \cite{EstebanA} and references therein.

\smallskip
%\noindent{\textbf{Acknowledgement}: TBA } 

\newpage

\vfill
\hrule

\end{document}